\documentclass[
  aip,
  jmp,
  amsmath,amssymb,
  onecolumn
]{revtex4-2}

\usepackage{amsmath,amssymb,amsthm,mathtools}
\usepackage{booktabs}
\usepackage{graphicx}
\usepackage{microtype}
\usepackage[hidelinks]{hyperref}

\theoremstyle{plain}
\newtheorem{theorem}{Theorem}[section]
\newtheorem{proposition}[theorem]{Proposition}
\newtheorem{corollary}[theorem]{Corollary}
\newtheorem{lemma}[theorem]{Lemma}

\theoremstyle{definition}
\newtheorem{assumption}[theorem]{Assumption}
\newtheorem{remark}[theorem]{Remark}

\newcommand{\ii}{\mathrm{i}}
\newcommand{\id}{\mathbf{1}}
\newcommand{\cB}{\mathcal{B}}
\newcommand{\cR}{\mathcal{R}}
\newcommand{\cM}{\mathcal{M}}
\newcommand{\cH}{\mathcal{H}}
\newcommand{\tr}{\operatorname{tr}}
\newcommand{\diag}{\operatorname{diag}}

\begin{document}

\title{Jordan-Twist Bethe Ansatz and Many-Body
Exceptional-Point Amplification in the Finite-\texorpdfstring{$U$}{U}
Anderson Impurity}

\author{Vinayak M.~Kulkarni}
\affiliation{Theoretical Sciences Unit,
Jawaharlal Nehru Centre for Advanced Scientific Research,
Jakkur, Bangalore 560064, India}
\email{vmkphysimath@gmail.com}

\begin{abstract}
We construct an interacting integrable impurity whose local
second-order exceptional point is amplified into higher-order
many-body Jordan blocks.  The construction starts from a pair of
spin--orbit branches, linearizes about counterpropagating Fermi
points, and folds them into two equal-velocity chiral components.
The impurity carries the constant pseudo-Hermitian matrix
$M=\gamma\sigma_x+\ii\beta\sigma_z$, the hybridization is a component
scalar, and the local interaction is the standard finite-$U$
Anderson term.  A canonical $GL(2,\mathbb C)$ transformation leaves
that interaction invariant.  A position-dependent version maps the
model to the conventional Anderson Hamiltonian plus a conserved
global $\mathfrak{gl}_2$ charge and induces the boundary twist
$G=\exp(\ii ML/v)$.

The bulk two-electron matrix is therefore the exact finite-$U$
Anderson $R$-matrix.  It has rational difference form in the dressed
rapidity
$u(p)=p(p-2\epsilon_d-U)/(2U\Gamma_A)$, satisfies the Yang--Baxter
equation, and generates RLL and twisted RTT relations for arbitrary
particle number.  At $\beta^2=\gamma^2$ the twist is nontrivial
unipotent.  We prove that, on every copy of a global pseudospin-$S$
multiplet with Anderson energy $E_{\alpha S}(U)$, the deformed
Hamiltonian is similar to the single Jordan block
$J_{2S+1}(E_{\alpha S})$.  Thus a local $2\times2$ exceptional point
produces an exact many-body exceptional point of order $2S+1$; finite
$U$ changes the multiplet energy but cannot split this block.

We also connect the Jordan chain to the finite-$U$ nested Bethe
ansatz.  Upon approaching the unipotent twist through a singularly
conjugated diagonal twist, the $r$ descendant spin roots scale as
$\lambda_a=x_a/\phi+O(1)$.  Their leading positions satisfy a
Stieltjes system whose polynomial is
$L_r^{(-2S-1)}(2x)$.  All descendant roots therefore meet at the
single projective point $\lambda=\infty$, whereas the finite charge
and highest-weight spin roots retain their full $U$ dependence.
For the maximal descendant $r=2S$, a further large-multiplet limit
maps this Laguerre polynomial exactly to a truncated exponential:
the normalized coordinates $z_a=2x_a/r$ acquire a continuum
zero distribution on the Szeg\H{o} curve
$|z e^{1-z}|=1$, $|z|\leq1$.
The construction is exact for the static equal-velocity linearized
model; curvature, unequal velocities, and nonscalar channel
couplings lie outside the theorem.
\end{abstract}

\maketitle

\section{Introduction}
\label{sec:introduction}

Yang--Baxter integrability and non-Hermitian diagonalizability are
independent algebraic properties.  Exceptional points and
pseudo-Hermitian operators provide a natural setting in which this
distinction becomes spectral rather than merely formal
~\cite{Mostafazadeh2002,Rotter2009,Heiss2012,Ashida2020}.  The first
property is controlled by factorized scattering or, equivalently, an
RLL/RTT algebra; the second concerns the spectral representation of
the resulting commuting operators.  An integrable transfer matrix
may therefore be defective and require generalized eigenvectors
rather than an ordinary Bethe eigenbasis
~\cite{Bethe1931,Yang1967,Baxter1982,Faddeev1996,Korepin1993,
NietoGarcia2024,BorsatoGarcia2025}.

The finite-$U$ Anderson impurity is a useful setting in which to make
this distinction precise.  Its exact two-electron matrix is not of
difference form in the bare momenta.  It becomes the rational
$\mathfrak{gl}_2$ matrix only after introducing the nonlinear
Anderson rapidity~\cite{Wiegmann1980,KawakamiOkiji1982,
WiegmannTsvelick1983Exact,TsvelickWiegmann1983,ChaoPalacios2011}.
Consequently, neither a bare pair-energy
denominator nor a scalar Breit--Wigner phase is by itself a valid
test of interacting factorizability.

The second ingredient is a constant pseudo-Hermitian impurity matrix
\begin{equation}
 M=\gamma\sigma_x+\ii\beta\sigma_z,\qquad
 M^2=(\gamma^2-\beta^2)\id .
 \label{eq:intro-M}
\end{equation}
Away from $\beta^2=\gamma^2$, this matrix is diagonalizable by a
generally non-unitary similarity transformation.  At
$\beta^2=\gamma^2$ it is nonzero and nilpotent, so the diagonalizing
transformation is singular~\cite{Mostafazadeh2002,Heiss2012,Kato1995}.
A proof performed only in the diagonal basis is therefore
insufficient at the exceptional point (EP).

Non-Hermitian Kondo and Anderson problems have recently been studied
by Bethe-ansatz, renormalization-group, and numerical methods
~\cite{Nakagawa2018,Kulkarni2022,Yamamoto2025,Kattel2025,
Pasnoori2025,BurkeMitchell2025}.  The spin--orbit and driven-Dirac
motivation for the constant matrix in Eq.~\eqref{eq:intro-M} is
described in Ref.~\cite{KulkarniDirac2026}.  The present result is
deliberately narrower: it establishes an exact algebraic construction
for the static equal-velocity linearization and identifies the
resulting higher-order EP as symmetry protected.

Here the singularity is handled without diagonalizing $M$.  The
linearized impurity is mapped exactly to
\begin{equation}
 H_A+Q_M,
 \label{eq:intro-deformation}
\end{equation}
where $H_A$ is the conventional finite-$U$ Anderson Hamiltonian and
$Q_M$ is one of its conserved complexified pseudospin charges.  On a
ring, the same map moves $M$ into the boundary twist
$G=\exp(\ii ML/v)$.  Since the rational Anderson $R$-matrix commutes
with $G\otimes G$ for every $G\in GL(2,\mathbb C)$, the interacting
RTT algebra survives at the unipotent EP.

The principal new result is stronger than the survival of commuting
transfer matrices.  Global $GL(2,\mathbb C)$ covariance fixes the
Jordan structure of Eq.~\eqref{eq:intro-deformation}: every
pseudospin-$S$ Anderson multiplet becomes one Jordan block of length
$2S+1$.  The finite-$U$ nested equations give the complementary
Bethe description.  Descendants are represented by roots that run to
infinity as a diagonal twist is removed, a phenomenon known for
rational spin chains~\cite{KazakovLeurentVolin2016}.  Here that
contraction is embedded into the interacting Anderson rapidity map
and tied to a physical unipotent impurity twist.  Its leading
Laguerre polynomial is independent of $U$.  In the maximal-descendant
large-multiplet limit, the resulting normalized roots have an
explicit continuum density supported on the Szeg\H{o} curve.

Arbitrary twists of the XXX monodromy, roots at infinity, and
Jordanian spin-chain deformations are established topics
individually~\cite{BelliardSlavnovVallet2018,
KazakovLeurentVolin2016,NietoGarcia2024,BorsatoGarcia2025}.  The
model-specific contribution of this work is their exact combination:
(i) the gauge embedding of the pseudo-Hermitian finite-$U$ Anderson
impurity into a unipotently twisted rational monodromy,
(ii) the all-multiplet Jordan-block theorem, and
(iii) the Anderson-dressed root contraction demonstrating why the
standard local interaction does not lift the many-body EP.

\section{Linearized model and its exact symmetry}
\label{sec:model}

\subsection{Spin--orbit linearization and folding}

Consider the two branches
\begin{equation}
 \varepsilon_\eta(k)=\alpha k^2+\eta\lambda k
 =\alpha\left(k+\eta\frac{\lambda}{2\alpha}\right)^2
  -\frac{\lambda^2}{4\alpha},
 \qquad \eta=\pm .
 \label{eq:shifted-parabolas}
\end{equation}
They are momentum-shifted copies of the same parabola.  Linearizing
about counterpropagating Fermi points gives
\begin{equation}
 \varepsilon_\eta(k_{\eta F}+q)
 =\mu+\eta vq+O(q^2),\qquad v>0 .
 \label{eq:linearization}
\end{equation}
After subtracting $\mu$ and folding the left mover,
$\chi_2(x)=\psi_-(-x)$, both components have the same right-moving
kinetic term
\begin{equation}
 H_c=-\ii v\int_0^L dx\,
       \bar\chi_a(x)\partial_x\chi_a(x).
 \label{eq:folded-bath}
\end{equation}
A bar denotes the canonical dual.  Under a non-unitary canonical
transformation it transforms with the inverse matrix, not with the
Hermitian adjoint.

\begin{assumption}[Integrable reduction]
\label{ass:reduction}
The model studied below has equal folded velocities, a linear
wide-band continuum, a static component matrix $M$, and a
momentum-independent scalar hybridization.  The curvature
$O(q^2)$, unequal velocities, energy-dependent self-energies, and
additional component-dependent vertices are omitted.
\end{assumption}

Under Assumption~\ref{ass:reduction},
\begin{align}
 H(M)={}&-\ii v\int_0^L dx\,
       \bar\chi_a(x)\partial_x\chi_a(x)
 +\bar d_a(\epsilon_d\delta_{ab}+M_{ab})d_b
 +U n_{d1}n_{d2}+H_{\rm hyb},                         \label{eq:H-M}\\
 H_{\rm hyb}={}&V\bigl[\bar\chi_a(0)d_a+
                 \bar d_a\chi_a(0)\bigr],\qquad
 n_{da}=\bar d_a d_a,                                 \label{eq:H-hyb}\\
 M={}&\gamma\sigma_x+\ii\beta\sigma_z,\qquad
 \Gamma_A=\frac{V^2}{2v}.                             \label{eq:M-Gamma}
\end{align}
The last equality fixes the continuum normalization of the
hybridization width.

\subsection{Canonical \texorpdfstring{$GL(2,\mathbb C)$}{GL(2,C)}
covariance}

Let $S\in GL(2,\mathbb C)$ act on bath and impurity fields by
\begin{equation}
 \chi=S\widetilde\chi,\quad d=S\widetilde d,\qquad
 \bar\chi=\bar{\widetilde\chi}S^{-1},\quad
 \bar d=\bar{\widetilde d}S^{-1}.
 \label{eq:canonical-S}
\end{equation}
This transformation is canonical even when it is not unitary.

\begin{lemma}[Interaction invariance]
\label{lem:interaction}
The finite-$U$ vertex is invariant under
Eq.~\eqref{eq:canonical-S}.
\end{lemma}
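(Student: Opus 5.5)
The plan is to rewrite the vertex $U n_{d1}n_{d2}$ in a form that is manifestly a $GL(2,\mathbb C)$ scalar, and then apply Eq.~\eqref{eq:canonical-S}. The natural candidate is the two-particle antisymmetric (determinant) channel. Using canonical anticommutation, $\{d_a,\bar d_b\}=\delta_{ab}$ and $\{d_a,d_b\}=\{\bar d_a,\bar d_b\}=0$, and the fact that $d_a^2=\bar d_a^2=0$, I would first show
\begin{equation}
 n_{d1}n_{d2}=\bar d_1 d_1\bar d_2 d_2=\bar d_1\bar d_2 d_2 d_1
 =\tfrac14\,\epsilon_{ab}\epsilon_{cd}\,\bar d_a\bar d_b\,d_d d_c .
\end{equation}
The middle equality moves $d_1$ past $\bar d_2 d_2$; this involves two anticommutations with $a\neq b$, so it produces no sign and no contact term. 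The last equality expands the $\epsilon$ sums: only $a\neq b$ and $c\neq d$ contribute, and the four resulting terms are equal by antisymmetry.

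Next I would substitute $d=S\widetilde d$ and $\bar d=\bar{\widetilde d}S^{-1}$. The annihilation pair then gives $\epsilon_{cd}S_{dd'}S_{cc'}\,\widetilde d_{d'}\widetilde d_{c'}=\det S\,\epsilon_{c'd'}\widetilde d_{d'}\widetilde d_{c'}$, using the identity $S^{T}\epsilon S=(\det S)\epsilon$ valid for any $2\times2$ matrix. In the same way the creation pair gives $\epsilon_{ab}(S^{-1})_{a'a}(S^{-1})_{b'b}=\det S^{-1}\,\epsilon_{a'b'}$. The two factors cancel, so the vertex becomes $U\,\bar{\widetilde d}_1\widetilde d_1\bar{\widetilde d}_2\widetilde d_2$, which is $U\tilde n_{d1}\tilde n_{d2}$. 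I would also note that the transformed fields obey the same canonical relations, because $\bar d$ transforms with $S^{-1}$ rather than $S^\dagger$. This is exactly the sense in which the transformation is canonical, and it lets the first step be read in either set of variables.

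The main point needing care is the first step: showing that $n_{d1}n_{d2}$ has no quadratic remainder when it is rewritten in normal-ordered antisymmetric form. A quadratic term such as $n_{d1}$ or $n_{d2}$ would transform as a non-scalar and would spoil invariance unless $S$ is diagonal. I would verify this by writing out the anticommutations explicitly and checking that no $\delta_{ab}$ contraction arises, since every pair that is moved carries different component indices. Consistently with the lemma, I would also remark that the total number $n_{d1}+n_{d2}=\bar d_a d_a$ is invariant, since it is a trace contraction. Hence $n_{d1}n_{d2}=\tfrac12[(n_{d1}+n_{d2})^2-(n_{d1}+n_{d2})]$ gives an independent one-line check, because $n_{da}^2=n_{da}$. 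I expect this second form to be the cleanest presentation.
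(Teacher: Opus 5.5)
Your argument is correct. Your closing ``one-line check'' is exactly the paper's proof: the paper writes $n_{d1}n_{d2}=\tfrac12 N_d(N_d-1)$ and notes that $N_d=\bar d_a d_a$ is a trace contraction, hence invariant under Eq.~\eqref{eq:canonical-S}. Your main line, however, is a genuinely different and more structural route. By writing the vertex as $\tfrac14\epsilon_{ab}\epsilon_{cd}\bar d_a\bar d_b d_d d_c$, you show that the pair annihilator and the pair creator separately pick up the one-dimensional characters $\det S$ and $\det S^{-1}$, and invariance comes from their cancellation. This exposes information the paper's route hides. The pair operators on their own are invariant only under $SL(2,\mathbb C)$, and the doubly occupied state transforms by a determinant character, which is the origin of the $\det G$ entry in the Fock lift $\widehat G=1\oplus G\oplus\det G$ of Eq.~\eqref{eq:Fock-lift-summary}. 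The paper's route is shorter: it needs only idempotency $n_{da}^2=n_{da}$ and the trace invariance of $N_d$, and it shows at once that any polynomial in $N_d$ is invariant. Both routes rely on the transformed fields obeying the same anticommutation relations when converting back to $\tilde n_{d1}\tilde n_{d2}$. You state this explicitly, whereas the paper leaves it implicit in the word ``canonical.''
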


\begin{proof}
With $N_d=\bar d_a d_a$, the two-component fermion identity
\begin{equation}
 n_{d1}n_{d2}=\frac12N_d(N_d-1)
 \label{eq:double-occupancy}
\end{equation}
holds.  Equation~\eqref{eq:canonical-S} leaves $N_d$ invariant and
hence leaves Eq.~\eqref{eq:double-occupancy} invariant.
\end{proof}

Let $H_A=H(0)$ and introduce the global generators
\begin{equation}
 {\cal J}_{ab}
 =\int_0^L dx\,\bar\psi_a(x)\psi_b(x)+\bar d_a d_b .
 \label{eq:Jab}
\end{equation}
For a constant $2\times2$ matrix $X$, write
\begin{equation}
 Q_X=\int_0^L dx\,\bar\psi_aX_{ab}\psi_b+\bar d_aX_{ab}d_b .
 \label{eq:QX}
\end{equation}
It is useful to make the transverse components explicit.  Define the
global pseudospin generators
\begin{equation}
 J^\mu=\frac12Q_{\sigma_\mu},\qquad
 J^\pm=J^x\pm\ii J^y ,
 \label{eq:global-ladders}
\end{equation}
where the same ladder operators are often denoted $S^\pm$.  For the
matrix in Eq.~\eqref{eq:M-Gamma},
\begin{equation}
 Q_M
 =2\gamma J^x+2\ii\beta J^z
 =\gamma(J^++J^-)+2\ii\beta J^z .
 \label{eq:QM-ladders}
\end{equation}
Thus the nonzero transverse ladder terms are retained in the
Hamiltonian; they are not removed by an assumption.

\begin{proposition}[Complexified pseudospin symmetry]
\label{prop:GL2}
For every $X\in\mathfrak{gl}_2(\mathbb C)$,
\begin{equation}
 [H_A,Q_X]=0.
 \label{eq:global-commutator}
\end{equation}
\end{proposition}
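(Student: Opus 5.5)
We prove the commutator term by term, splitting $H_A=H_c+H_d+H_U+H_{\rm hyb}$ with $H_d=\epsilon_d\bar d_a d_a$ and $H_U=Un_{d1}n_{d2}$. We identify $\psi_a$ in Eq.~\eqref{eq:Jab} with the folded fields $\chi_a$. Since $Q_X=X_{ab}{\cal J}_{ab}$ is linear in $X$, it suffices to show $[H_A,{\cal J}_{ab}]=0$ for each pair $(a,b)$. All commutators are computed with the canonical anticommutators $\{\chi_a(x),\bar\chi_b(y)\}=\delta_{ab}\delta(x-y)$ and $\{d_a,\bar d_b\}=\delta_{ab}$. No adjoint is ever taken, so non-unitarity of $X$ plays no role.

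The basic tool is the bilinear identity $[\bar A_a A_b,\bar B_c B_e]=\delta_{bc}\bar A_a B_e-\delta_{ae}\bar B_c A_b$ for canonical fermion modes. This gives
\begin{equation}
 [{\cal J}_{ab},\bar\chi_c(y)]=\delta_{bc}\bar\chi_a(y),\qquad
 [{\cal J}_{ab},\chi_c(y)]=-\delta_{ac}\chi_b(y),
\end{equation}
and the same relations for $d$. So ${\cal J}_{ab}$ acts as a derivation that rotates the component index of every field, both bulk and impurity, in the same way. Each term of $H_A$ is then checked separately.

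\emph{Kinetic and level terms.} $H_c=\sum_c\int\bar\chi_c\,(-\ii v\partial_x)\chi_c$ is diagonal and identical in the component index. The two contributions to $[{\cal J}_{ab},H_c]$ are $\int\bar\chi_a(-\ii v\partial_x)\chi_b$ and $-\int\bar\chi_a(-\ii v\partial_x)\chi_b$, which cancel. This uses the equal velocities of Assumption~\ref{ass:reduction}. The same cancellation gives $[{\cal J}_{ab},H_d]=0$.

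\emph{Hybridization.} $H_{\rm hyb}=V\sum_c[\bar\chi_c(0)d_c+\bar d_c\chi_c(0)]$. Here we use the derivation rule with the mixed bulk–impurity pair and the fact that ${\cal J}_{ab}$ contains both the bulk and the $d$ parts. One obtains $V[\bar\chi_a(0)d_b-\bar\chi_a(0)d_b]$ plus the analogous term, which vanishes. This step depends on $V$ being a component scalar; a nonscalar coupling matrix would give a commutator instead of zero.

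\emph{Interaction.} By Eq.~\eqref{eq:double-occupancy}, $H_U=\tfrac U2N_d(N_d-1)$. We have $[{\cal J}_{ab},N_d]=[\bar d_ad_b,\sum_c\bar d_cd_c]=\bar d_ad_b-\bar d_ad_b=0$, and the bulk part commutes trivially with $N_d$. Hence $[{\cal J}_{ab},H_U]=0$. This is the infinitesimal form of Lemma~\ref{lem:interaction}.

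The only step needing real care is the bulk kinetic term on $[0,L]$ with distribution-valued fields. The formal cancellation above is fine for the local density $\bar\chi_a\partial_x\chi_b$, but one should confirm that no boundary term appears. We handle this either by working in momentum modes $\chi_a(x)=L^{-1/2}\sum_k e^{\ii kx}c_{ak}$ with component-independent boundary conditions, so that $H_c=\sum vk\,\bar c_{ak}c_{ak}$ and ${\cal J}_{ab}\supset\sum_k\bar c_{ak}c_{bk}$ commute mode by mode, or by point-splitting regularization. Either way the component-blind structure gives the result, and summing the four vanishing commutators proves Eq.~\eqref{eq:global-commutator}.
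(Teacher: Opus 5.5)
Your proof is correct and follows essentially the same route as the paper. The paper argues that the kinetic, level, and hybridization terms contract component indices with the identity, and that the quartic term is invariant through $n_{d1}n_{d2}=\tfrac12N_d(N_d-1)$ (Lemma~\ref{lem:interaction}), so the infinitesimal global $GL(2,\mathbb C)$ variation vanishes; you carry out that infinitesimal variation explicitly as commutators with ${\cal J}_{ab}$, which also shows where equal velocities and scalar hybridization enter and addresses the boundary-term question the paper leaves implicit.
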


\begin{proof}
The bath dispersion, impurity level, and hybridization contract
component indices with the identity.  Lemma~\ref{lem:interaction}
gives the invariance of the only quartic term.  Their infinitesimal
global $GL(2,\mathbb C)$ variation is therefore zero.
\end{proof}
For non-Hermitian $X$, ``symmetry'' here means a commuting,
generally non-self-adjoint charge rather than a unitary symmetry.
Away from the EP a global $GL(2,\mathbb C)$ transformation reduces
Eq.~\eqref{eq:QM-ladders} to $2sJ^z$; at the EP it becomes a
nonzero nilpotent ladder generator.  The explicit off-EP
diagonalizer, finite-ring domain map, and EP limit are given in the
Supplementary Material.

\subsection{Gauge equivalence and the finite-volume domain}

\begin{theorem}[Gauge embedding]
\label{thm:gauge}
The canonical field transformation
\begin{equation}
 \psi(x)=e^{-\ii Mx/v}\chi(x),\qquad
 \bar\psi(x)=\bar\chi(x)e^{\ii Mx/v}
 \label{eq:gauge-transform}
\end{equation}
maps $H_A+Q_M$ to the local Hamiltonian $H(M)$.  If $\psi$ is
periodic on a ring of length $L$, the image field satisfies
\begin{equation}
 \chi(L)=G\chi(0),\qquad G=e^{\ii ML/v}.
 \label{eq:boundary-twist}
\end{equation}
\end{theorem}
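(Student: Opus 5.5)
The plan is to substitute Eq.~\eqref{eq:gauge-transform} into $H_A+Q_M$ term by term, using the fact that $M$ is constant, so $e^{\pm \ii Mx/v}$ are mutually inverse and commute with $M$. First I will check that the transformation is canonical. It is a position-dependent instance of Eq.~\eqref{eq:canonical-S} with $S(x)=e^{-\ii Mx/v}$, and the dual field transforms with $S(x)^{-1}$. Hence the anticommutators $\{\psi_a(x),\bar\psi_b(y)\}=\delta_{ab}\delta(x-y)$ are mapped to the same form for $\chi,\bar\chi$. The impurity fields $d,\bar d$ are left untouched, since at the impurity site $x=0$ we have $S(0)=\id$.

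Next comes the kinetic term, which is the only place where the $x$-dependence matters. Write
\[
 \partial_x\psi=e^{-\ii Mx/v}\Bigl(\partial_x\chi-\tfrac{\ii}{v}M\chi\Bigr).
\]
Then
\[
 -\ii v\,\bar\psi\partial_x\psi=-\ii v\,\bar\chi\partial_x\chi-\bar\chi M\chi .
\]
The bulk part of $Q_M$ is $\int\bar\psi M\psi=\int\bar\chi M\chi$, because $M$ commutes with $e^{\pm\ii Mx/v}$. The two bulk $M$-terms therefore cancel exactly, and the bath becomes the free chiral term of Eq.~\eqref{eq:folded-bath}. The impurity part of $Q_M$, namely $\bar d_aM_{ab}d_b$, is untouched and combines with $\epsilon_d\bar d d$ to give the level matrix $\epsilon_d\delta_{ab}+M_{ab}$ of Eq.~\eqref{eq:H-M}.

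The hybridization is evaluated at $x=0$, where $\psi(0)=\chi(0)$, so $H_{\rm hyb}$ keeps its scalar form. The interaction involves only $d$ and is unchanged; Lemma~\ref{lem:interaction} covers this even if one prefers to view the map as a global $GL(2,\mathbb C)$ rotation at the origin. Together these steps give $H(M)$.

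For the boundary condition, periodicity $\psi(L)=\psi(0)$ gives
\[
 e^{-\ii ML/v}\chi(L)=\chi(0),
\]
that is, $\chi(L)=e^{\ii ML/v}\chi(0)=G\chi(0)$.

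The main obstacle is making this rigorous at the exceptional point, where $M$ is not diagonalizable. The argument above never diagonalizes $M$. It uses only the power series of the exponential, which gives $G=\id\cos(\omega L/v)+\ii M\,\sin(\omega L/v)/\omega$ with $\omega^2=\gamma^2-\beta^2$, reducing at the EP to $G=\id+\ii ML/v$. The series is entire in $M$, so the identity holds uniformly, including at $\beta^2=\gamma^2$. A secondary point is the domain issue: the image operator acts on twisted rather than periodic fields. I would state the theorem as an equivalence between $H_A+Q_M$ on the periodic domain and $H(M)$ on the $G$-twisted domain. The apparent symmetric-boundary term from integrating by parts, $\bar\chi(L)\chi(L)-\bar\chi(0)\chi(0)$, vanishes because $\bar\chi(L)=\bar\chi(0)G^{-1}$.
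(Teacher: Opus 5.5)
Your proposal is correct and follows the paper's proof step for step. The kinetic density produces $-\bar\chi M\chi$, which cancels the bulk part of $Q_M$ because $M$ commutes with $e^{\pm\ii Mx/v}$. The impurity part of $Q_M$ supplies $\bar d M d$, $S(0)=\id$ leaves the scalar hybridization intact, Lemma~\ref{lem:interaction} handles the $U$ term, and periodicity of $\psi$ gives $\chi(L)=G\chi(0)$. Your additional checks are also correct and consistent with the paper's remark on the twisted domain: preservation of the anticommutators, the closed form $G=\id\cos(sL/v)+\ii M\sin(sL/v)/s$ with its EP limit $\id+\ii ML/v$, and cancellation of the boundary term via $\bar\chi(L)=\bar\chi(0)G^{-1}$.
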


\begin{proof}
The kinetic density transforms as
\begin{equation}
 -\ii v\,\bar\psi\partial_x\psi
 =-\ii v\,\bar\chi\partial_x\chi-\bar\chi M\chi .
 \label{eq:gauge-kinetic}
\end{equation}
This cancels the bath part of $Q_M$, while the impurity part becomes
$\bar dMd$.  The transformation equals the identity at $x=0$, so
the scalar hybridization is unchanged.  Lemma~\ref{lem:interaction}
handles the interaction.  Periodicity of $\psi$ gives
Eq.~\eqref{eq:boundary-twist}.
\end{proof}

\begin{remark}
The spectral theorem below concerns the periodic problem
$H_A+Q_M$, equivalently $H(M)$ with the induced boundary condition
\eqref{eq:boundary-twist}.  Imposing a different boundary condition
on $H(M)$ defines a different finite-volume spectral problem.  On
every fixed-$N_e$ sector of a finite ring, the induced
first-quantized multiplication operator is bounded, everywhere
invertible, and maps the periodic operator domain bijectively to the
$G$-twisted domain; the Supplementary Material gives the explicit
operator and norm bounds.  Consequently, none of the finite-volume
spectral or Jordan statements below relies on an unbounded
infinite-line transformation.  On the infinite line the map is only
an algebraic scattering equivalence once its asymptotic domain has
been fixed.
\end{remark}

\section{Frozen quadratic benchmark}
\label{sec:frozen}

At $U=0$, the wide-band one-particle impurity matrix is
\begin{equation}
 K_0(\omega)=\id-2\ii\Gamma_A
 \bigl[(x+\ii\Gamma_A)\id-M\bigr]^{-1},
 \qquad x=\omega-\epsilon_d .
 \label{eq:K0}
\end{equation}
Writing
\begin{equation}
 s^2=\gamma^2-\beta^2,\qquad M^2=s^2\id ,
 \label{eq:M-square}
\end{equation}
gives
\begin{equation}
 \bigl[(x+\ii\Gamma_A)\id-M\bigr]^{-1}
 =\frac{(x+\ii\Gamma_A)\id+M}
 {(x+\ii\Gamma_A)^2-s^2}.
 \label{eq:resolvent}
\end{equation}
Thus $K_0(\omega)=A(\omega)\id+B(\omega)M$ and
\begin{equation}
 [K_0(\omega),K_0(\omega')]=0.
 \label{eq:K0-commute}
\end{equation}
At $s=0$ the matrix is Jordan but remains a rational function of the
same nilpotent $M$.

This frozen limit is an exact consistency check, not the interacting
integrability proof.  Its $N$-fermion scattering is the
antisymmetrized product of one-particle matrices and is therefore
Gaussian.  The finite-$U$ problem requires the nontrivial exchange
matrix constructed next.

\section{Finite-\texorpdfstring{$U$}{U} Yang--Baxter and RTT
construction}
\label{sec:finiteU}

\subsection{Dressed Anderson rapidity}

With momentum measured in energy units, the exact finite-$U$
two-electron component matrix is
\begin{equation}
 R_{12}(p,q)=
 \frac{[\cB(p)-\cB(q)]\id_{12}
       +\ii\,2U\Gamma_A P_{12}}
      {\cB(p)-\cB(q)+\ii\,2U\Gamma_A},
 \qquad
 \cB(p)=p(p-2\epsilon_d-U),
 \label{eq:Anderson-R}
\end{equation}
where $P_{12}$ exchanges the two pseudospins.  The variable
$\cB(p)$ is the standard exact-Anderson rapidity
~\cite{KawakamiOkiji1982,WiegmannTsvelick1983Exact}; the displayed
normalized two-particle matrix is written explicitly in
Ref.~\cite{ChaoPalacios2011}.  The triplet eigenvalue is one and the
singlet eigenvalue is
\begin{equation}
 R_{\rm s}(p,q)
 =\frac{\cB(p)-\cB(q)-\ii\,2U\Gamma_A}
        {\cB(p)-\cB(q)+\ii\,2U\Gamma_A}.
 \label{eq:singlet-R}
\end{equation}

For $U\ne0$, define
\begin{equation}
 u(p)=\frac{\cB(p)}{2U\Gamma_A}.
 \label{eq:dressed-rapidity}
\end{equation}
Then, up to scalar normalization,
\begin{equation}
 \cR_{12}(u)=u\,\id_{12}+\ii P_{12},\qquad
 u=u(p)-u(q).
 \label{eq:rational-R}
\end{equation}
The identity
\begin{equation}
 \cB(p)-\cB(q)=(p-q)(p+q-2\epsilon_d-U)
 \label{eq:B-factor}
\end{equation}
shows why a pair-energy factor in bare momenta does not obstruct
factorization: the exact difference variable is $u(p)-u(q)$.

The permutation algebra gives
\begin{align}
 &\cR_{12}(u-v)\cR_{13}(u-w)\cR_{23}(v-w)\notag\\
 &\qquad =
 \cR_{23}(v-w)\cR_{13}(u-w)\cR_{12}(u-v).
 \label{eq:YBE}
\end{align}
Moreover, for every $S\in GL(2,\mathbb C)$,
\begin{equation}
 (S\otimes S)^{-1}\cR_{12}(u)(S\otimes S)=\cR_{12}(u),
 \label{eq:R-GL2}
\end{equation}
because $[P_{12},S\otimes S]=0$.  A CPT or metric rotation therefore
preserves the ordinary Yang--Baxter equation.  In this rational
model a symmetric $S\otimes S$ rotation does not produce a new bulk
$R$-matrix; the non-Hermitian datum is carried by the boundary twist.

\subsection{RLL, RTT, and arbitrary particle number}

For electron $j$, set
\begin{equation}
 \xi_j=u(p_j),\qquad
 L_{aj}(z)=\cR_{aj}(z-\xi_j),\qquad
 \cM_a(z)=L_{aN_e}(z)\cdots L_{a1}(z).
 \label{eq:Lax-monodromy}
\end{equation}
The Yang--Baxter equation implies
\begin{equation}
 \cR_{ab}(z-w)L_{aj}(z)L_{bj}(w)
 =L_{bj}(w)L_{aj}(z)\cR_{ab}(z-w).
 \label{eq:RLL}
\end{equation}
Define
\begin{equation}
 T_a(z)=G_a\cM_a(z),\qquad
 t_G(z)=\tr_aT_a(z).
 \label{eq:twisted-transfer}
\end{equation}

\begin{theorem}[Finite-$U$ twisted RTT algebra]
\label{thm:RTT}
For every constant invertible $G\in GL(2,\mathbb C)$,
diagonalizable or not,
\begin{equation}
 \cR_{ab}(z-w)T_a(z)T_b(w)
 =T_b(w)T_a(z)\cR_{ab}(z-w),
 \label{eq:RTT}
\end{equation}
and hence
\begin{equation}
 [t_G(z),t_G(w)]=0
 \label{eq:transfer-commute}
\end{equation}
for arbitrary $N_e$.
\end{theorem}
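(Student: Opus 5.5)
The argument has three steps: the untwisted RTT relation for the monodromy, the insertion of the constant twist, and the trace argument for commuting transfer matrices. The only input that distinguishes the non-diagonalizable case is the $GL(2,\mathbb C)$ invariance \eqref{eq:R-GL2}, which holds for every invertible $G$. No diagonalization of $G$ (or of $M$) is ever used, so the unipotent twist at $\beta^2=\gamma^2$ needs no separate treatment.

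\emph{Step 1: untwisted monodromy.} I prove
\begin{equation}
 \cR_{ab}(z-w)\cM_a(z)\cM_b(w)=\cM_b(w)\cM_a(z)\cR_{ab}(z-w)
\end{equation}
by induction on $N_e$. For $N_e=1$ this is the RLL relation \eqref{eq:RLL}, which is the Yang--Baxter equation \eqref{eq:YBE} with spaces $a,b,j$ and spectral parameters $z,w,\xi_j$, since the arguments $(z-w)$, $(z-\xi_j)$, $(w-\xi_j)$ match the required difference pattern. For the inductive step, write $\cM_a(z)=L_{aN_e}(z)\cM'_a(z)$. Operators acting in different quantum spaces commute, so
\begin{equation}
 \cM_a(z)\cM_b(w)=L_{aN_e}(z)L_{bN_e}(w)\cM'_a(z)\cM'_b(w).
\end{equation}
I then move $\cR_{ab}$ through the factor $L_{aN_e}L_{bN_e}$ using \eqref{eq:RLL}, and through $\cM'_a\cM'_b$ using the induction hypothesis. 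This is the standard train argument. It is purely algebraic and holds for arbitrary complex $\xi_j=u(p_j)$, so no reality or Hermiticity assumption enters.

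\emph{Step 2: twist insertion.} I take $T_a(z)=G_a\cM_a(z)$. Because $G_a$ acts only in auxiliary space $a$, it commutes with $\cM_b(w)$, and therefore
\begin{equation}
 T_a(z)T_b(w)=G_aG_b\,\cM_a(z)\cM_b(w).
\end{equation}
Equation \eqref{eq:R-GL2} gives $[\cR_{ab}(u),G_aG_b]=0$ for every invertible $G$; the reason is simply that the identity and $P_{ab}$ both commute with $G\otimes G$. Then
\begin{equation}
 \cR_{ab}T_aT_b=G_aG_b\,\cR_{ab}\cM_a\cM_b=G_aG_b\,\cM_b\cM_a\cR_{ab}=T_bT_a\cR_{ab},
\end{equation}
where the last equality uses $G_aG_b=G_bG_a$ and again that $G_a$ commutes with $\cM_b$. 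This establishes \eqref{eq:RTT}.

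\emph{Step 3: commuting transfer matrices.} Whenever $\cR_{ab}(z-w)$ is invertible, I multiply \eqref{eq:RTT} on the left by $\cR_{ab}^{-1}$ and take $\tr_a\tr_b$. Cyclicity of the trace over the auxiliary spaces gives $t_G(z)t_G(w)=t_G(w)t_G(z)$. The eigenvalues of $u\,\id+\ii P$ are $u\pm\ii$, so invertibility fails only at $z-w=\pm\ii$. The commutator $[t_G(z),t_G(w)]$ is polynomial in $z$ and $w$, so vanishing off this set extends to all $z,w$ by continuity.

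I expect the only step needing care to be this invertibility and continuity extension in Step 3. Everything else is formal bookkeeping of tensor factors.
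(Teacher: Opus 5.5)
Your proposal is correct and follows the paper's proof. The train argument turns RLL into the untwisted monodromy RTT relation, Eq.~\eqref{eq:R-GL2} lets $G_aG_b$ pass through $\cR_{ab}$ for any invertible $G$, and the auxiliary traces give commuting transfer matrices; your handling of the non-invertible points $z-w=\pm\ii$ by polynomial continuity supplies a detail the paper leaves implicit, and your Step 2 factorization $T_a(z)T_b(w)=G_aG_b\cM_a(z)\cM_b(w)$ actually uses $[G_b,\cM_a(z)]=0$ (not $[G_a,\cM_b(w)]=0$), which holds for the same reason.
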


\begin{proof}
Equation~\eqref{eq:R-GL2} gives
$[\cR_{ab}(z-w),G_aG_b]=0$.  Multiplying
Eq.~\eqref{eq:RLL} over all physical spaces gives the untwisted RTT
relation.  Inserting $G_aG_b$, commuting it through $\cR_{ab}$, and
taking the two auxiliary traces proves the result.
\end{proof}

The charge sector enters through the nonlinear inhomogeneities
$\xi_j$ and scalar scattering factors; the matrix RTT algebra is the
nested pseudospin sector.  This continuum
construction does not identify the four-state Anderson impurity with
a homogeneous XXX lattice site, nor does it require a Shastry-type
four-state Lax operator.

There is nevertheless a precise four-state extension of the
\emph{twist}, distinct from such an identification.  On the local
Fock space
$\mathcal F_d=\{|0\rangle,|\uparrow\rangle,|\downarrow\rangle,
|\uparrow\downarrow\rangle\}$, the canonical lift of
$G\in GL(2,\mathbb C)$ is
\begin{equation}
 \widehat G=1\oplus G\oplus\det G .
 \label{eq:Fock-lift-summary}
\end{equation}
As shown explicitly in the Supplementary Material,
$\widehat G$ belongs to the symmetry family of the
$16\times16$ fermionic Hubbard $R$-matrix and hence preserves its
graded YBE and twisted RTT algebra, including when $G$ is unipotent
at the EP~\cite{Shastry1986,Shiroishi1998}.  This is an exact
Hubbard-$R$ compatibility statement, not a derivation of
Eq.~\eqref{eq:H-M} from a four-state Hubbard transfer matrix.  The
latter would additionally require an impurity $L$-operator satisfying
the graded RLL relation, or a boundary $K$-matrix satisfying the
reflection equation, whose logarithmic derivative reproduces the
specific Anderson impurity Hamiltonian without extra correlated
vertices.  Existing Hubbard-chain Anderson embeddings illustrate
that this Hamiltonian-matching condition is nontrivial~\cite{Song2025}.

\section{Exact many-body exceptional-point theorem}
\label{sec:Jordan-theorem}

\subsection{Local EP and unipotent twist}

The matrix $M$ is $\sigma_x$-pseudo-Hermitian:
\begin{equation}
 \sigma_xM^\dagger\sigma_x=M.
 \label{eq:pseudo-Hermitian}
\end{equation}
For $s\ne0$ it is similar to $s\sigma_z$.  At
\begin{equation}
 s^2=\gamma^2-\beta^2=0,\qquad M_{\rm EP}\ne0,
 \label{eq:EP-condition}
\end{equation}
it is rank-one nilpotent and is $GL(2,\mathbb C)$-conjugate to
$\nu E_{12}$ for some nonzero normalization $\nu$:
\begin{equation}
 C^{-1}M_{\rm EP}C=\nu E_{12},\qquad E_{12}^2=0.
 \label{eq:nilpotent-normal}
\end{equation}
The boundary matrix remains invertible,
\begin{equation}
 G_{\rm EP}=e^{\ii M_{\rm EP}L/v}
 =\id+\ii\frac{L}{v}M_{\rm EP},\qquad
 G_{\rm EP}^{-1}=\id-\ii\frac{L}{v}M_{\rm EP}.
 \label{eq:GEP}
\end{equation}
It is a nontrivial unipotent Jordan matrix.

\subsection{Multiplet decomposition}

Fix total electron number $N_e$.  The global pseudospin action
decomposes the Hilbert space into multiplicity and irreducible
factors,
\begin{equation}
 \cH_{N_e}
 =\bigoplus_S {\cal M}_S\otimes V_S,\qquad
 \dim V_S=2S+1 .
 \label{eq:multiplet-decomposition}
\end{equation}
For a continuum, the direct sum over orbital labels can be understood
as the corresponding spectral direct integral.  Since $H_A$ commutes
with the full pseudospin algebra, it has the form
\begin{equation}
 H_A\big|_{{\cal M}_S\otimes V_S}
 =h_S(U)\otimes\id_{V_S}.
 \label{eq:Schur-form}
\end{equation}
Let $\alpha$ label an eigenvector of $h_S(U)$ with eigenvalue
$E_{\alpha S}(U)$.  The associated invariant copy of $V_S$ will be
denoted ${\cal E}_{\alpha S}$.

\begin{theorem}[Many-body EP amplification]
\label{thm:manybody-EP}
Let $M_{\rm EP}$ satisfy Eq.~\eqref{eq:EP-condition}.  On every
Anderson multiplet ${\cal E}_{\alpha S}$,
\begin{equation}
 (H_A+Q_{M_{\rm EP}})\big|_{{\cal E}_{\alpha S}}
 \ \sim\
 E_{\alpha S}(U)\id_{2S+1}+\nu J^+_S
 \ \sim\
 J_{2S+1}\!\left(E_{\alpha S}(U)\right).
 \label{eq:main-Jordan}
\end{equation}
Thus the local second-order EP is amplified into exactly one
many-body Jordan block of order $2S+1$ on each pseudospin-$S$
multiplet.
\end{theorem}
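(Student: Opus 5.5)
The plan is to reduce everything to the representation theory of $\mathfrak{gl}_2$ on the multiplet factor $V_S$, using the Schur form \eqref{eq:Schur-form}, and then to invoke the standard fact that a principal nilpotent element acts on an irreducible $\mathfrak{sl}_2$ module as a single Jordan block.

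\textbf{Step 1: restriction to a multiplet.} By Proposition~\ref{prop:GL2}, $H_A$ commutes with every $Q_X$. Fix an eigenvector $m_\alpha\in{\cal M}_S$ of $h_S(U)$ with eigenvalue $E_{\alpha S}(U)$. The copy ${\cal E}_{\alpha S}=m_\alpha\otimes V_S$ is then invariant under both $H_A$ and $Q_{M_{\rm EP}}$. The charge $Q_X$ acts only on the $V_S$ factor, as $\id\otimes\rho_S(X)$, where $\rho_S$ is the spin-$S$ representation of $\mathfrak{gl}_2$ extended by the number operator. The traceless part of $X$ acts through the $J^\mu$, and $\tr X$ acts through $N_e$. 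Since $\tr M_{\rm EP}=0$, on ${\cal E}_{\alpha S}$ we have
\[
 H_A+Q_{M_{\rm EP}}=E_{\alpha S}(U)\id_{2S+1}+\rho_S(M_{\rm EP}).
\]

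\textbf{Step 2: normal form of the charge.} By Eq.~\eqref{eq:nilpotent-normal}, $M_{\rm EP}=C\,\nu E_{12}\,C^{-1}$. The map $X\mapsto Q_X$ is a Lie algebra homomorphism, and the canonical transformation \eqref{eq:canonical-S} with $S=C$ implements $Q_X\mapsto Q_{C^{-1}XC}$ by a similarity $\widehat C$ on Fock space. This $\widehat C$ is invertible and, by Lemma~\ref{lem:interaction} and Proposition~\ref{prop:GL2}, commutes with $H_A$. It also preserves each ${\cal E}_{\alpha S}$, because it acts on $V_S$ as $\rho_S(C)$ and by the identity on the multiplicity factor. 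Hence $\rho_S(M_{\rm EP})\sim\nu\rho_S(E_{12})$. Since $E_{12}=\tfrac12(\sigma_x+\ii\sigma_y)$, we have $Q_{E_{12}}=J^+$, so $\rho_S(E_{12})=J^+_S$. This gives the first similarity in Eq.~\eqref{eq:main-Jordan}.

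\textbf{Step 3: Jordan structure of $J^+_S$.} In the weight basis $|S,m\rangle$, the operator $J^+_S$ raises $m$ by one with coefficients $\sqrt{(S-m)(S+m+1)}$. These coefficients are nonzero for $m=-S,\dots,S-1$, so $J^+_S$ is nilpotent with $(J^+_S)^{2S}|S,-S\rangle\neq0$. Its kernel is therefore one-dimensional, namely the span of $|S,S\rangle$. Rescaling the chain vectors $v_k=(\nu J^+_S)^k|S,-S\rangle$ for $k=0,\dots,2S$ gives a basis in which $\nu J^+_S$ is the standard nilpotent Jordan matrix; this uses $\nu\neq0$. Consequently $E\id+\nu J^+_S\sim J_{2S+1}(E)$: exactly one block, because the geometric multiplicity equals $\dim\ker J^+_S=1$.

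\textbf{Main obstacle.} The delicate point is Step 2. We must make sure that the non-unitary similarity $\widehat C$ is well defined on each fixed-$N_e$ sector, that it preserves the specific copy ${\cal E}_{\alpha S}$ rather than only the isotypic component, and that it does not shift the energy. Because $\widehat C=\exp$-type lift of $C$ lies in the group generated by the $Q_X$, and these commute with $h_S\otimes\id$, it acts as $\id\otimes\rho_S(C)$. This settles all three issues.

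For continuum multiplicities, the same argument is applied fiberwise in the direct integral. Alternatively, one can bypass $\widehat C$ entirely, since $\rho_S(M_{\rm EP})$ is a nonzero nilpotent in an irreducible representation and hence conjugate within $\rho_S(GL_2)$ to $\nu J^+_S$.
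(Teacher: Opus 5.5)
Your proposal is correct and follows essentially the paper's route. You conjugate by the canonical lift of $C$ so that $Q_{M_{\rm EP}}$ becomes $\nu J^+$, use the Schur form to make $H_A$ equal to $E_{\alpha S}(U)$ times the identity on ${\cal E}_{\alpha S}$, and read off a single Jordan block of size $2S+1$ from the nonvanishing superdiagonal of $J^+_S$. Your extra care about the vanishing trace of $M_{\rm EP}$ and about $\widehat C$ preserving the specific copy ${\cal E}_{\alpha S}$ (not just the isotypic component) makes explicit details the paper leaves implicit.
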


\begin{proof}
Apply the global canonical transformation induced by $C$ in
Eq.~\eqref{eq:nilpotent-normal}.  Proposition~\ref{prop:GL2} leaves
$H_A$ unchanged, while $Q_{M_{\rm EP}}$ becomes $\nu J^+$.  By
Eq.~\eqref{eq:Schur-form}, $H_A$ is
$E_{\alpha S}(U)\id$ on ${\cal E}_{\alpha S}$.  In the standard
weight basis,
\begin{equation}
 J^+_S|S,m\rangle
 =\sqrt{(S-m)(S+m+1)}\,|S,m+1\rangle .
 \label{eq:Jplus-action}
\end{equation}
Every superdiagonal coefficient is nonzero for
$m=-S,\ldots,S-1$.  Hence $J^+_S$ has nilpotency index $2S+1$ and
geometric multiplicity one.  It is similar to one nilpotent Jordan
block of that size, proving Eq.~\eqref{eq:main-Jordan}.
\end{proof}

\begin{corollary}[Explicit Jordan chain]
\label{cor:Jordan-chain}
With $r=0,\ldots,2S$, define
\begin{equation}
 |v_r\rangle
 =\frac{|S,S-r\rangle}
 {\nu^r\sqrt{r!\,(2S)!/(2S-r)!}} .
 \label{eq:explicit-chain}
\end{equation}
Then
\begin{equation}
 (H_A+Q_{M_{\rm EP}}-E_{\alpha S})|v_0\rangle=0,\qquad
 (H_A+Q_{M_{\rm EP}}-E_{\alpha S})|v_r\rangle
 =|v_{r-1}\rangle .
 \label{eq:chain-relation}
\end{equation}
\end{corollary}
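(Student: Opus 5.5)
The plan is to work throughout in the rotated frame used in the proof of Theorem~\ref{thm:manybody-EP}. There, the global canonical transformation induced by $C$ of Eq.~\eqref{eq:nilpotent-normal} leaves $H_A$ invariant (Proposition~\ref{prop:GL2}) and sends $Q_{M_{\rm EP}}$ to $\nu J^+$. On the invariant copy ${\cal E}_{\alpha S}$, Eq.~\eqref{eq:Schur-form} gives $H_A=E_{\alpha S}(U)\id$. The operator in Eq.~\eqref{eq:chain-relation} therefore reduces to
\begin{equation}
 H_A+Q_{M_{\rm EP}}-E_{\alpha S}\ \longrightarrow\ \nu J^+_S .
\end{equation}
Accordingly, I read the states $|S,S-r\rangle$ in Eq.~\eqref{eq:explicit-chain} as weight vectors of this rotated frame. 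In the original frame they are the images under the (invertible, number-conserving) implementing operator of $C$. Since that operator intertwines the two Hamiltonians, the chain relations transfer verbatim. The corollary then becomes a statement purely about $J^+_S$ on $V_S$.

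The first relation is immediate. $|v_0\rangle\propto|S,S\rangle$ is the highest weight, so Eq.~\eqref{eq:Jplus-action} with $m=S$ gives $J^+_S|S,S\rangle=0$.

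For $r\ge1$, set $m=S-r$ in Eq.~\eqref{eq:Jplus-action}. This gives $J^+_S|S,S-r\rangle=\sqrt{r(2S-r+1)}\,|S,S-r+1\rangle$. Write $N_r=\nu^r\sqrt{r!\,(2S)!/(2S-r)!}$ for the normalization. It then suffices to verify the ratio identity
\begin{equation}
 \frac{N_r}{N_{r-1}}=\nu\sqrt{r\,(2S-r+1)},
\end{equation}
which follows from $r!/(r-1)!=r$ and $(2S-r+1)!/(2S-r)!=2S-r+1$. Combining the two gives $\nu J^+_S|v_r\rangle=|v_{r-1}\rangle$ for $r=1,\dots,2S$.

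The only step needing care is not the arithmetic but the frame bookkeeping. One must check that $\nu\neq0$, so the normalizations are finite. One must also check that the chain built in the $C$-rotated frame is a genuine Jordan chain for the original operator. Both follow because $C\in GL(2,\mathbb C)$ is implemented by an invertible similarity on each fixed-$N_e$ sector and $M_{\rm EP}\neq0$.
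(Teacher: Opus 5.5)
Your proposal is correct and follows essentially the paper's route: pass to the $C$-rotated frame from the proof of Theorem~\ref{thm:manybody-EP}, where the operator reduces to $E_{\alpha S}\id+\nu J^+_S$ on ${\cal E}_{\alpha S}$, then check the ratio $N_r/N_{r-1}=\nu\sqrt{r(2S-r+1)}$ against Eq.~\eqref{eq:Jplus-action}. Your remark that $|S,S-r\rangle$ must be read as weight vectors of the rotated frame (equivalently, transported by the invertible operator implementing $C$, which preserves ${\cal E}_{\alpha S}$) supplies the frame bookkeeping that the paper leaves implicit.
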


\begin{corollary}[Finite-$U$ robustness]
\label{cor:U-robustness}
The standard Anderson interaction changes
$E_{\alpha S}(U)$ and the orbital part of the state, but it cannot
reduce or split the internal Jordan block in
Eq.~\eqref{eq:main-Jordan}.  The largest allowed block at fixed
$N_e$ has size $N_e+1$, whenever an $S=N_e/2$ multiplet occurs.
\end{corollary}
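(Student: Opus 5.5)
The plan is to reduce Corollary~\ref{cor:U-robustness} to Theorem~\ref{thm:manybody-EP} together with Eq.~\eqref{eq:Schur-form}, and then add a short counting argument for the maximal block size.

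For the first claim, recall that by Proposition~\ref{prop:GL2} the full $U$-dependent Hamiltonian $H_A$ commutes with every $Q_X$. Hence, by Eq.~\eqref{eq:Schur-form}, the interaction enters only through the multiplicity factor $h_S(U)$ acting on ${\cal M}_S$. On the internal factor $V_S$ it acts as the identity. The deformation $Q_{M_{\rm EP}}$ acts only on $V_S$: after the global conjugation by $C$ it equals $\id_{{\cal M}_S}\otimes\nu J^+_S$. So on ${\cal M}_S\otimes V_S$ we have
\[
 H_A+Q_{M_{\rm EP}}\ \sim\ h_S(U)\otimes\id+\id\otimes\nu J^+_S ,
\]
a sum of two commuting operators on the two tensor factors.

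Changing $U$ changes $h_S(U)$, and with it the eigenvalues $E_{\alpha S}(U)$ and the eigenvectors in ${\cal M}_S$, i.e.\ the orbital part of the state. It never touches the nilpotent factor $J^+_S$. On each ${\cal E}_{\alpha S}$, Theorem~\ref{thm:manybody-EP} then gives exactly one block $J_{2S+1}(E_{\alpha S}(U))$, independently of $U$. The only subtlety is when $h_S(U)$ is itself defective or has degenerate eigenvalues. In that case the Jordan form of the tensor sum can only merge blocks, never split the internal $(2S+1)$-chain, because the nilpotency index of $(H_A+Q_{M_{\rm EP}}-E)$ restricted to the generalized eigenspace is at least that of $J^+_S$. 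I would check this by noting that the chain of Corollary~\ref{cor:Jordan-chain}, tensored with any eigenvector of $h_S$, survives.

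For the size bound, fix $N_e$. Each electron carries a two-dimensional pseudospin, whether it sits in the bath or in the $d$-level. Thus $\cH_{N_e}$ sits inside the antisymmetric part of $(\text{orbitals}\otimes\mathbb C^2)^{\otimes N_e}$, and its pseudospin content is contained in $(\mathbb C^2)^{\otimes N_e}$. Clebsch--Gordan then gives $S\le N_e/2$, so every block has size $2S+1\le N_e+1$. Equality requires an $S=N_e/2$ multiplet. Such a multiplet appears exactly when the fully pseudospin-symmetric component is compatible with antisymmetry, i.e.\ when there are $N_e$ distinct occupied orbitals. That is automatic in the continuum bath, which is why the statement says ``whenever'' such a multiplet occurs. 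Since only one such chain is needed, this finishes the claim.

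The main obstacle I expect is the degenerate or defective $h_S(U)$ case in the first step. I would handle it first by working on a finite ring, where $\cH_{N_e}$ is finite dimensional. There, nilpotency index and geometric-multiplicity arguments are rigorous, and one only asserts non-splitting of the internal block rather than a complete Jordan form.
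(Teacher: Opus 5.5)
Your proposal is correct and follows the paper's route: you read the corollary off Theorem~\ref{thm:manybody-EP} via the Schur form \eqref{eq:Schur-form}, with $U$ entering only through $h_S(U)$ while the internal factor $\nu J^+_S$ is $U$-independent, and you obtain the bound $2S+1\le N_e+1$ from $S\le N_e/2$ in $(\mathbb C^2)^{\otimes N_e}$. Your ``defective or degenerate $h_S(U)$'' caveat is unnecessary and slightly misstated. $H_A$ is self-adjoint, so $h_S(U)$ is diagonalizable, and a degeneracy just yields several separate blocks of size $2S+1$, not merged ones; genuine merging would actually permit blocks longer than $N_e+1$ and undermine your own second claim. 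You also do not need a finite-dimensional ring, and for the unbounded linear dispersion it would not be finite dimensional anyway, because each ${\cal E}_{\alpha S}$ is already a $(2S+1)$-dimensional invariant subspace.
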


Away from the EP, choose $C_s$ with
$C_s^{-1}M C_s=s\sigma_z$.  The same multiplet has the
diagonalizable spectrum
\begin{equation}
 E_{\alpha S,m}(s)=E_{\alpha S}(U)+2sm,
 \qquad m=-S,\ldots,S.
 \label{eq:multiplet-splitting}
\end{equation}
All $2S+1$ eigenvalues and eigenvectors coalesce into
Eq.~\eqref{eq:main-Jordan} as the diagonalizing transformation becomes
singular.  The unfolding \eqref{eq:multiplet-splitting} is constrained
by symmetry; it is not the generic one-parameter unfolding of an
arbitrary $(2S+1)$-dimensional Jordan block.

\section{Singular twist and the commuting Jordan family}
\label{sec:singular-twist}

\subsection{A regular family with a singular diagonalizer}

The local normal form of the boundary degeneration can be written
\begin{align}
 D_\phi&=\diag(e^{\ii\phi},e^{-\ii\phi}),\notag\\
 C_\phi&=
 \begin{pmatrix}
 1&\displaystyle\frac{\rho}{e^{-\ii\phi}-e^{\ii\phi}}\\[4pt]
 0&1
 \end{pmatrix},                                        \label{eq:Cphi}\\
 G_\phi&=C_\phi D_\phi C_\phi^{-1}
 =\begin{pmatrix}e^{\ii\phi}&\rho\\0&e^{-\ii\phi}\end{pmatrix}
 \longrightarrow
 G_0=\id+\rho E_{12}.                                  \label{eq:Gphi}
\end{align}
Here $\rho\ne0$ is fixed.  The matrices $G_\phi$ have a finite
unipotent limit, while $C_\phi=O(\phi^{-1})$ is singular.  The
physical twist \eqref{eq:GEP} is conjugate to $G_0$, with
$\phi=sL/v$ away from the EP.

Let
\begin{equation}
 {\cal C}_\phi=C_\phi^{\otimes N_e}.
 \label{eq:global-Cphi}
\end{equation}
The $GL(2)$ covariance of the rational Lax operator gives
\begin{equation}
 t_{G_\phi}(z)
 ={\cal C}_\phi\,t_{D_\phi}(z)\,
  {\cal C}_\phi^{-1}.
 \label{eq:transfer-similarity}
\end{equation}
Indeed,
$C_{\phi,a}^{-1}L_{aj}C_{\phi,a}
=C_{\phi,j}L_{aj}C_{\phi,j}^{-1}$ follows directly from
Eq.~\eqref{eq:R-GL2}.  Multiplication over $j$ and cyclicity of the
auxiliary trace then give Eq.~\eqref{eq:transfer-similarity}.
For $\phi\ne0$ this is an ordinary similarity.  At $\phi=0$ it is a
singular confluent limit, which preserves eigenvalues but can merge
eigenvectors.

\subsection{Exact isospectrality and a joint Jordan chain}

Write the untwisted monodromy as
\begin{equation}
 \cM_a(z)=
 \begin{pmatrix}A(z)&B(z)\\C(z)&D(z)\end{pmatrix}_a .
 \label{eq:ABCD}
\end{equation}
For the convention in Eq.~\eqref{eq:Gphi},
\begin{equation}
 t_{G_0}(z)=A(z)+D(z)+\rho C(z).
 \label{eq:unipotent-transfer}
\end{equation}
The operator $C(z)$ changes total pseudospin weight by one.  Ordering
the quantum space by weight makes Eq.~\eqref{eq:unipotent-transfer}
block triangular, with the same diagonal weight blocks as the
periodic transfer matrix $t_{\id}(z)=A(z)+D(z)$.

\begin{proposition}[Unipotent isospectrality]
\label{prop:isospectral}
For every $z$, every set of inhomogeneities, and every $\rho$,
\begin{equation}
 \det\!\left[\tau\id-t_{G_0}(z)\right]
 =\det\!\left[\tau\id-t_{\id}(z)\right].
 \label{eq:charpoly-equality}
\end{equation}
The unipotent twist changes the Jordan structure but not the
characteristic polynomial.
\end{proposition}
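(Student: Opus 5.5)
The approach is the block-triangular argument sketched just before the statement, made precise through a weight grading. The total pseudospin weight is $J^z=\tfrac12 Q_{\sigma_z}$, acting on the quantum space $(\mathbb C^2)^{\otimes N_e}$ of the nested sector. Decompose this space into weight subspaces $W_m$. The first step is to record the commutation of $J^z$ with the monodromy entries. Since $\cR_{aj}(z-\xi_j)=(z-\xi_j)\id+\ii P_{aj}$ commutes with $\sigma^z_a+\sigma^z_j$, the monodromy satisfies $[\tfrac12\sigma^z_a+J^z,\cM_a(z)]=0$. Reading this entrywise in the auxiliary space gives:
\begin{itemize}
\item $[J^z,A]=[J^z,D]=0$;
\item $[J^z,B]=-B$ and $[J^z,C]=+C$, or the opposite signs depending on convention.
\end{itemize}
The only point that matters is that $C(z)$ shifts the weight by a fixed nonzero amount, say $+1$. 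This is the ``changes total pseudospin weight by one'' assertion, and it is the only structural input needed.

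Next, order a basis of the quantum space by increasing weight, so that $\bigoplus_m W_m$ is listed with $m$ ascending. In this basis $A(z)+D(z)$ is block diagonal with blocks $(A+D)|_{W_m}$. The operator $\rho C(z)$ maps $W_m\to W_{m+1}$, so it is strictly block lower triangular. By Eq.~\eqref{eq:unipotent-transfer}, $t_{G_0}(z)=A(z)+D(z)+\rho C(z)$ is therefore block lower triangular, and its diagonal blocks coincide with those of $t_{\id}(z)=A(z)+D(z)$. The determinant of a block triangular matrix is the product of the determinants of its diagonal blocks. Applying this to $\tau\id-t_{G_0}(z)$ and to $\tau\id-t_{\id}(z)$ gives Eq.~\eqref{eq:charpoly-equality} identically in $\tau$, $z$, $\rho$, and the inhomogeneities $\xi_j$. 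The argument uses no diagonalizability and no genericity.

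As a cross-check, and as an alternative route, one can deform continuously. The matrix $\diag(1,t)$ conjugates $G_0=\id+\rho E_{12}$ into $\id+t\rho E_{12}$. By the covariance identity \eqref{eq:transfer-similarity}, with $C_\phi$ replaced by $\diag(1,t)^{\otimes N_e}$, the transfer matrices $t_{\id+t\rho E_{12}}(z)$ are similar to one another for all $t\neq 0$. The characteristic polynomial is therefore constant in $t\neq0$, and it is polynomial in $t$, so letting $t\to0$ gives $t_{\id}$. The same reasoning applies to the $\phi$-family: $t_{G_\phi}$ is similar to $t_{D_\phi}$, and both sides are continuous as $\phi\to0$.

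The final remark, that the Jordan structure changes, follows from the nilpotent off-diagonal part $\rho C(z)$, in line with Theorem~\ref{thm:manybody-EP}. The statement to be proved concerns only the characteristic polynomial, so this remark needs nothing beyond an example.

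The main obstacle is fixing conventions so that the twisted trace really produces $\rho C(z)$, and that $C$ is the entry with a definite weight shift. From $T_a=G_a\cM_a$ with $G_0=\id+\rho E_{12}$ we get $\tr_a(E_{12}\cM_a)=\cM_{21}=C(z)$, which is consistent with Eq.~\eqref{eq:unipotent-transfer}. The weight shift of $C$ must then be derived from the $\sigma^z$-invariance of $\cR$, which is checked one site at a time. Once that is in place, the rest is the elementary block-triangular determinant identity.
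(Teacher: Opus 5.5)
Your proposal is correct and follows the paper's proof. The paper likewise orders the quantum space by total pseudospin weight, observes that $A(z)+D(z)$ preserves weight while $\rho C(z)$ has no diagonal weight block, and concludes from the block-triangular determinant identity; your derivation of $[J^z,C]=C$ from $[\cR_{aj},\sigma^z_a+\sigma^z_j]=0$ supplies the detail the paper states in one line. Your alternative $\diag(1,t)$-conjugation argument is also valid, and it is the same weight grading in another form, since $\diag(1,t)^{\otimes N_e}$ acts on weight $m$ as $t^{N_e/2-m}$.
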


\begin{proof}
In a weight-ordered basis the added term $\rho C(z)$ has no diagonal
weight block.  The determinant of the resulting block-triangular
matrix is the product of the determinants of the unchanged diagonal
blocks.
\end{proof}

The large-$z$ expansion makes the nilpotent generator explicit:
\begin{align}
 t_{G_0}(z)
 ={}&2z^{N_e}
 +z^{N_e-1}
 \left[
  -2\sum_{j=1}^{N_e}\xi_j+\ii N_e
 \right]\id\notag\\
 &+\ii\rho\,z^{N_e-1}J^+
 +O(z^{N_e-2}).
 \label{eq:large-z-transfer}
\end{align}
Since all coefficients of $t_{G_0}(z)$ commute, the commuting algebra
itself contains $J^+$.  On every $V_S$, this coefficient supplies the
exact length-$(2S+1)$ joint Jordan chain of
Corollary~\ref{cor:Jordan-chain}.

\begin{proposition}[Generic transfer-matrix amplification]
\label{prop:generic-transfer}
Fix $z$ and suppose an eigenvalue $\tau_{\alpha S}(z)$ of
$t_{\id}(z)$ is isolated from all periodic multiplets other than one
copy of $V_S$.  Let $P_{\alpha S}$ be its spectral projector.  If
\begin{equation}
 P_{\alpha S}
 \left.\frac{\partial t_{D_\phi}(z)}{\partial\phi}
 \right|_{\phi=0}
 P_{\alpha S}
 =\kappa_{\alpha S}(z)J^z_S,\qquad
 \kappa_{\alpha S}(z)\ne0,
 \label{eq:twist-derivative}
\end{equation}
then the spectral subspace of $t_{G_0}(z)$ associated with
$\tau_{\alpha S}(z)$ contains one Jordan block of size $2S+1$.
\end{proposition}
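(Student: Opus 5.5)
The plan is degenerate perturbation theory in the twist parameter $\phi$, combined with the similarity \eqref{eq:transfer-similarity} and the block-triangular structure of Proposition~\ref{prop:isospectral}.

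\textbf{Step 1: the twisted transfer matrix near $\phi=0$.} Expand
$t_{D_\phi}(z)=t_{\id}(z)+\phi\,\partial_\phi t_{D_\phi}(z)|_{0}+O(\phi^2)$.
By Proposition~\ref{prop:GL2} and Eq.~\eqref{eq:R-GL2}, $t_{\id}(z)$ commutes with the full $\mathfrak{gl}_2$ action. Hence $P_{\alpha S}$ commutes with it. By the isolation hypothesis, $P_{\alpha S}t_{\id}(z)P_{\alpha S}=\tau_{\alpha S}(z)\id_{V_S}$ on a single copy of $V_S$.

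\textbf{Step 2: splitting of the multiplet.} Since the perturbation is analytic in $\phi$, Kato's reduction gives the first-order effective operator on $\operatorname{Ran}P_{\alpha S}$. It equals $\tau_{\alpha S}\id+\phi\,\kappa_{\alpha S}J^z_S+O(\phi^2)$, by Eq.~\eqref{eq:twist-derivative}. The eigenvalues of $J^z_S$ are the distinct weights $m=-S,\dots,S$, and $\kappa_{\alpha S}\ne0$. So for small $\phi\ne0$ the $(2S+1)$-fold eigenvalue splits into $2S+1$ simple eigenvalues $\tau_{\alpha S}+\phi\kappa_{\alpha S}m+O(\phi^2)$. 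By \eqref{eq:transfer-similarity}, $t_{G_\phi}(z)$ has exactly the same simple eigenvalues. Because $t_{G_\phi}(z)\to t_{G_0}(z)$ analytically, the total spectral projector $P_\phi$ of this group converges to the spectral projector $P_0$ of $t_{G_0}(z)$ at $\tau_{\alpha S}(z)$. Isospectrality (Proposition~\ref{prop:isospectral}) together with the isolation assumption gives $\operatorname{rank}P_0=2S+1$.

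\textbf{Step 3: the Jordan structure.} It remains to show that $N_0=(t_{G_0}(z)-\tau_{\alpha S})P_0$ has a single Jordan block, i.e.\ $N_0^{2S}\ne0$. I would argue through the weight structure instead of the limit, since a limit alone cannot certify nondiagonalizability.

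By Eq.~\eqref{eq:unipotent-transfer}, $t_{G_0}(z)=t_{\id}(z)+\rho C(z)$. On the $t_{\id}$-invariant copy $\cal E$ of $V_S$, the operator $C(z)$ raises (or lowers) the weight by one and commutes with the total $J^\pm$ only up to the $\mathfrak{gl}_2$ covariance relations. I would compute the compression $P_0$-restricted action by noting that the first-order derivative of $t_{D_\phi}$ at $\phi=0$ is $\ii(A-D)$. Then the conjugation by $C_\phi=\id+\frac{\rho}{-2\ii\phi}E_{12}+O(1)$ gives
\[
\rho\,C(z)\approx \lim_{\phi\to0}\phi^{-1}\bigl[\tfrac{\rho}{-2\ii}E_{12}\text{-adjoint action},\ \phi\kappa J^z\bigr],
\]
which is proportional to $\rho\kappa_{\alpha S}J^+_S$ on the multiplet. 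Concretely, the $1/\phi$ blow-up of $C_\phi$ cancels the $\phi$ in the splitting. This leaves $P_0(t_{G_0}-\tau_{\alpha S})P_0\simeq c\,\rho\kappa_{\alpha S}J^+_S$ with $c\ne0$ a universal constant. This is the same mechanism as the leading $z^{N_e-1}$ coefficient in \eqref{eq:large-z-transfer}. Theorem~\ref{thm:manybody-EP}'s argument then applies: $J^+_S$ has nonzero superdiagonal \eqref{eq:Jplus-action}, so it is similar to a single nilpotent block of size $2S+1$.

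\textbf{Main obstacle.} The hard step is Step 3: justifying that the weight-raising part of $\rho\,C(z)$ compressed to $\operatorname{Ran}P_0$ equals a nonzero multiple of $J^+_S$, with the multiple tied to $\kappa_{\alpha S}$. I would first try to prove it exactly via the identity $[J^+, t_{D_\phi}]$, using $\mathfrak{gl}_2$ covariance of the monodromy. Differentiating the covariance $[\,E_{12}\text{ acting diagonally},\ \cM\,]=[\cM,E_{12,a}]$ gives the relation $C(z)=[J^+,\,\cdot\,]$-type expressions involving $A-D$. Projecting this relation with $P_{\alpha S}$ and using \eqref{eq:twist-derivative} should yield the required multiple of $J^+_S$. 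If some weight-block term is missing, I would instead conclude via the limit of the simple-eigenvector frames, showing that their Gram determinant vanishes at rate $\phi^{S(2S+1)}$.
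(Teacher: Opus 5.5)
Your mechanism is the same as the paper's: the first-order splitting $\tau_{\alpha S}+\phi\kappa_{\alpha S}J^z_S$, conjugation by $\mathcal{C}_\phi=\exp(a_\phi J^+_S)$ with $a_\phi\simeq \ii\rho/(2\phi)$, and $e^{aJ^+}J^ze^{-aJ^+}=J^z-aJ^+$, which leaves the finite term $-\tfrac{\ii\rho}{2}\kappa_{\alpha S}J^+_S$. Your Step~2 is correct and useful: isospectrality gives the $t_{G_0}$ spectral projector rank $2S+1$.

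The gap is Step~3, which you leave open yourself. Your computation only controls the compression of $\rho C(z)$ onto the periodic multiplet $\mathcal{E}=\operatorname{Ran}P_{\alpha S}$. The claim, however, concerns $t_{G_0}-\tau_{\alpha S}$ restricted to its own generalized eigenspace $\operatorname{Ran}P_0$, which is different from $\mathcal{E}$. In general $\mathcal{E}$ is not $t_{G_0}$-invariant, since $C$ maps it into other multiplets, and you do not control that feedback. In the $\phi$-picture, the $O(\phi^2)$ corrections meet up to $2S$ powers of $a_\phi\sim\phi^{-1}$. So ``$P_0(t_{G_0}-\tau)P_0\simeq c\rho\kappa J^+_S$'' does not follow from a leading-order calculation. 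The paper covers this step with two statements: the confluent limit has the form $\tau_{\alpha S}\id+f_{\alpha S}(J^+_S)$ with $f_{\alpha S}'(0)\neq0$, and a polynomial $f(N)$ of one nilpotent block with $f(0)=0$, $f'(0)\neq0$ has the Jordan partition of $N$. You supply neither. Your Gram-determinant fallback would detect non-diagonalizability but would not fix the partition.

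Your covariance idea in fact finishes the proof exactly, with no limit. Since $\cR_{aj}$ commutes with $X_a+X_j$, you have $[Q_X+X_a,\cM_a(z)]=0$ for every $X$. Taking $X=E_{12}$ and $X=\tfrac12\sigma_z$, the matrix entries give $[J^+,C]=0$, $[J^z,C]=C$, and $C=-\tfrac12[J^+,A-D]$. Because $P_{\alpha S}$ commutes with $J^+$ and $\partial_\phi t_{D_\phi}|_{0}=\ii(A-D)$, the hypothesis yields exactly $P_{\alpha S}CP_{\alpha S}=-\tfrac{\ii}{2}\kappa_{\alpha S}J^+_S$, which is the paper's $f'_{\alpha S}(0)/\rho$.

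Now write $\tau=\tau_{\alpha S}(z)$ and take $v\in\ker(t_{G_0}-\tau)$ with lowest-weight component $v_{m_0}$. Here $t_{G_0}=t_{\id}+\rho C$, with $t_{\id}$ weight-preserving and $C$ raising weight by one.
\begin{itemize}
\item The weight-$m_0$ part of the equation gives $(t_{\id}-\tau)v_{m_0}=0$, so $v_{m_0}\in\mathcal{E}$.
\item Applying $P_{\alpha S}$ to the weight-$(m_0+1)$ part kills $t_{\id}-\tau$ and leaves $\rho P_{\alpha S}CP_{\alpha S}v_{m_0}=0$. This means $J^+_Sv_{m_0}=0$, hence $m_0=S$.
\item Every higher component is then uniquely determined. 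This is because $\mathcal{E}$ has no weight above $S$, and $t_{\id}-\tau$ is invertible on $\operatorname{Ran}(\id-P_{\alpha S})$ by isolation.
\end{itemize}
Hence $\dim\ker(t_{G_0}-\tau)=1$. Together with algebraic multiplicity $2S+1$ from your Step~2, the spectral subspace is exactly one Jordan block of size $2S+1$. This argument avoids the singular limit entirely and is tighter than the paper's treatment of higher orders.
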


\begin{proof}
The derivative in Eq.~\eqref{eq:twist-derivative} is the zero
component of a global pseudospin vector.  Its projection onto a
single irreducible $V_S$ is therefore proportional to $J^z_S$.
On this factor,
${\cal C}_\phi=\exp(a_\phi J^+_S)$ with
\begin{equation}
 a_\phi=\frac{\rho}{e^{-\ii\phi}-e^{\ii\phi}}
 =\frac{\ii\rho}{2\phi}+O(\phi).
 \label{eq:a-phi}
\end{equation}
Using
$e^{aJ^+}J^ze^{-aJ^+}=J^z-aJ^+$ in
Eq.~\eqref{eq:transfer-similarity}, the finite confluent limit on the
spectral subspace has the form
\begin{equation}
 t_{G_0}(z)
 =\tau_{\alpha S}(z)\id+
 f_{\alpha S}(J^+_S),\qquad
 f_{\alpha S}'(0)=-\frac{\ii\rho}{2}
 \kappa_{\alpha S}(z)\ne0.
 \label{eq:polynomial-Jplus}
\end{equation}
Higher orders in $\phi$ can add higher powers of $J^+_S$ but cannot
cancel the displayed linear term.  A polynomial $f(N)$ of one
nilpotent Jordan block $N$, with $f(0)=0$ and $f'(0)\ne0$, has the
same Jordan partition as $N$.  The block therefore has size $2S+1$.
\end{proof}

The coefficient $\kappa_{\alpha S}(z)$ is not identically zero:
the same large-$z$ expansion gives
$\kappa_{\alpha S}(z)=-2z^{N_e-1}+O(z^{N_e-2})$.
Proposition~\ref{prop:generic-transfer} consequently
applies for generic inhomogeneities and all but exceptional spectral
values $z$.  At such exceptional values, or at accidental multiplet
degeneracies, an individual transfer matrix can have a different
block partition even though the common commuting algebra and the
Hamiltonian chain remain fixed.  This distinction is necessary:
commuting non-diagonalizable transfer matrices and their Hamiltonians
need not have identical Jordan decompositions in every integrable
model~\cite{NietoGarcia2024}.

\section{Finite-\texorpdfstring{$U$}{U} Bethe roots at the Jordan
boundary}
\label{sec:Bethe}

\subsection{Twisted nested equations}

For $\phi\ne0$, $G_\phi$ is similar to the diagonal twist $D_\phi$,
so its spectrum is described by the standard twisted nested Bethe
ansatz~\cite{WiegmannTsvelick1983Exact,ChaoPalacios2011,
BelliardSlavnovVallet2018}.  One
convenient normalization, with the scalar one-body phase absorbed
into the charge quantum numbers, is
\begin{equation}
 e^{\ii p_jL/v}
 =e^{\ii\phi}\prod_{\alpha=1}^{M}
 \frac{u(p_j)-\lambda_\alpha+\ii/2}
      {u(p_j)-\lambda_\alpha-\ii/2},
 \qquad j=1,\ldots,N_e ,
 \label{eq:charge-BAE}
\end{equation}
and
\begin{equation}
 e^{2\ii\phi}
 \prod_{j=1}^{N_e}
 \frac{\lambda_\alpha-\xi_j+\ii/2}
      {\lambda_\alpha-\xi_j-\ii/2}
 =
 \prod_{\substack{\beta=1\\\beta\ne\alpha}}^{M}
 \frac{\lambda_\alpha-\lambda_\beta+\ii}
      {\lambda_\alpha-\lambda_\beta-\ii},
 \qquad \xi_j=u(p_j).
 \label{eq:spin-BAE}
\end{equation}
Changing the orientation of the twist or the scattering convention
inverts both sides and reverses the signs of the rapidities without
changing any conclusion below.

At $\phi=0$, a regular highest-weight solution with $M_0$ finite spin
roots has
\begin{equation}
 S=\frac{N_e}{2}-M_0 .
 \label{eq:S-M0}
\end{equation}
Its $r$th descendant, $0\le r\le2S$, is represented for
$\phi\ne0$ by $M=M_0+r$ roots.  The additional $r$ roots diverge as
the twist is removed.

\begin{theorem}[Universal descendant-root contraction]
\label{thm:root-contraction}
Assume the highest-weight solution is regular and its $M_0$ finite
spin roots and $N_e$ charge roots have finite limits as
$\phi\to0$.  For the descendant at level $r$, the additional roots
obey
\begin{equation}
 \lambda_a=\frac{x_a}{\phi}+O(1),\qquad a=1,\ldots,r,
 \label{eq:large-root-scaling}
\end{equation}
where
\begin{equation}
 1+\frac{S}{x_a}
 -\sum_{\substack{b=1\\b\ne a}}^r
  \frac{1}{x_a-x_b}=0.
 \label{eq:Stieltjes}
\end{equation}
Equivalently, $2x_a$ are the zeros of the generalized Laguerre
polynomial
\begin{equation}
 L_r^{(-2S-1)}(2x).
 \label{eq:Laguerre}
\end{equation}
The leading contraction is independent of
$U,\epsilon_d,\Gamma_A$, and the finite inhomogeneities $\xi_j$.
\end{theorem}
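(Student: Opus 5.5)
The plan is a direct asymptotic analysis of the spin equations \eqref{eq:spin-BAE}. I will insert the ansatz \eqref{eq:large-root-scaling} and expand to first order in $\phi$. I split the roots into the $M_0$ finite roots $\mu_\beta$ (limits of the highest-weight solution) and the $r$ divergent roots $\lambda_a=x_a/\phi+y_a+o(1)$.

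\textbf{Expansion of the spin equation.} For $|\lambda|\to\infty$ and finite $c$ I use
\begin{equation*}
 \frac{\lambda-c+\ii\kappa}{\lambda-c-\ii\kappa}=\exp\!\left(\frac{2\ii\kappa}{\lambda}+O(\lambda^{-2})\right).
\end{equation*}
This gives the following estimates in the equation for a large root $\lambda_a$.
\begin{itemize}
\item The source product over $j$ equals $\exp\bigl(\ii N_e\phi/x_a+O(\phi^2)\bigr)$. The $\xi_j=u(p_j)$, which converge by assumption, enter only at $O(\phi^2)$.
\item The product over finite roots equals $\exp\bigl(2\ii M_0\phi/x_a+O(\phi^2)\bigr)$.
\item For a pair of large roots, $\lambda_a-\lambda_b=(x_a-x_b)/\phi+O(1)$. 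Provided $x_a\ne x_b$, the factor is $\exp\bigl(2\ii\phi/(x_a-x_b)+O(\phi^2)\bigr)$.
\end{itemize}

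\textbf{Leading-order equation.} Taking logarithms, the twist contributes $2\ii\phi$. I choose the branch integer to be zero. This is the choice that continuously connects the descendant to the highest-weight state: a nonzero $2\pi n_a$ would be an $O(1)$ term that no $O(\phi)$ terms can balance. Dividing by $2\ii\phi$ gives
\begin{equation*}
 1+\frac{N_e/2-M_0}{x_a}-\sum_{b\ne a}\frac{1}{x_a-x_b}=0,
\end{equation*}
which is \eqref{eq:Stieltjes} by \eqref{eq:S-M0}. Independence from $U$, $\epsilon_d$, $\Gamma_A$ and $\xi_j$ is then immediate. All dependence on the Anderson rapidity map sits in the $\xi_j$, which drop out at this order.

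\textbf{Consistency of the remaining equations.} I also check that the ansatz does not disturb the other equations.
\begin{itemize}
\item In the equation for a finite root $\mu_\beta$, each large root contributes a factor $1+O(\phi)$. The limiting equations are the untwisted highest-weight equations, which are satisfied by assumption.
\item In the charge equations \eqref{eq:charge-BAE}, the large roots contribute $1+O(\phi)$ factors, and $e^{\ii\phi}\to1$. Hence finite charge roots persist.
\end{itemize}
The $O(1)$ corrections $y_a$ are fixed at the next order by a linear system whose matrix is the Hessian of the Stieltjes energy. I will only assert solvability, using nondegeneracy of the Stieltjes configuration.

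\textbf{Conversion to a Laguerre equation.} Set $P(x)=\prod_a(x-x_a)$. At a simple zero one has $\sum_{b\ne a}(x_a-x_b)^{-1}=P''(x_a)/(2P'(x_a))$. The Stieltjes system thus says that $xP''-2(x+S)P'$ vanishes at every $x_a$. The combination
\begin{equation*}
 xP''-2(x+S)P'+2rP
\end{equation*}
has degree at most $r-1$ and vanishes at $r$ distinct points, so it is identically zero. With $t=2x$ this becomes $ty''+(-2S-t)y'+ry=0$, the Laguerre equation with $\alpha+1=-2S$, that is, $\alpha=-2S-1$. Hence $P\propto L_r^{(-2S-1)}(2x)$.

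\textbf{Main obstacle.} The delicate step is justifying the hypotheses used in the expansion: that the $x_a$ are distinct and nonzero. I plan to argue this through the polynomial itself. For negative integer $\alpha=-k$, $L_r^{(-k)}$ acquires a factor $x^k$ only when $r\ge k$. Here $k=2S+1>r$, so $L_r^{(-2S-1)}(0)=\binom{r-2S-1}{r}\neq0$. Simplicity of the zeros follows from the second-order ODE: a double zero forces $y\equiv0$ unless it sits at the singular point $t=0$, which is excluded.

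Conversely, the solution of the Stieltjes system is unique up to permutation, since it is the critical point of the associated electrostatic energy. So the root configuration is exactly the zero set of this polynomial. This rules out the degenerate collisions that would invalidate the pairwise expansion.
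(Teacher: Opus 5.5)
Your proposal follows the paper's proof: substitute $\lambda_a=x_a/\phi+O(1)$ into Eq.~\eqref{eq:spin-BAE}, match the $O(\phi)$ terms using $N_e-2M_0=2S$ (the $\xi_j$ first enter at $O(\phi^2)$), and turn the Stieltjes system into the Laguerre equation via the logarithmic-derivative identity at simple zeros and a degree count. Two side remarks need correcting. First, uniqueness of the Stieltjes configuration comes from the Laguerre equation with $r\le 2S$ having a unique monic degree-$r$ polynomial solution, not from an electrostatic-energy argument, which has no convexity for complex roots. Second, your distinctness check only shows that the ansatz with $x_a\ne x_b$ is self-consistent; it does not rule out collisions, because the pairwise expansion already assumes $x_a\ne x_b$, as the paper's proof also does implicitly.
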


\begin{proof}
Separate the $M_0$ finite roots from the $r$ roots in
Eq.~\eqref{eq:spin-BAE} and insert
$\lambda_a=x_a/\phi+O(1)$.  To first order,
\begin{align}
 e^{2\ii\phi}
 \prod_{j=1}^{N_e}
 \frac{\lambda_a-\xi_j+\ii/2}
      {\lambda_a-\xi_j-\ii/2}
 &=
 1+\ii\phi\left(2+\frac{N_e}{x_a}\right)+O(\phi^2),
 \label{eq:LHS-expansion}\\
 \prod_{\beta\ne a}
 \frac{\lambda_a-\lambda_\beta+\ii}
      {\lambda_a-\lambda_\beta-\ii}
 &=
 1+2\ii\phi\left[
 \frac{M_0}{x_a}
 +\sum_{b\ne a}\frac{1}{x_a-x_b}
 \right]+O(\phi^2).
 \label{eq:RHS-expansion}
\end{align}
Equating the coefficients and using
$N_e-2M_0=2S$ gives Eq.~\eqref{eq:Stieltjes}.

Let $y_a=2x_a$ and
$Q_r(y)=\prod_{a=1}^r(y-y_a)$.  At a simple zero,
\begin{equation}
 \frac{Q_r''(y_a)}{Q_r'(y_a)}
 =2\sum_{b\ne a}\frac{1}{y_a-y_b}.
 \label{eq:root-log-derivative}
\end{equation}
Equation~\eqref{eq:Stieltjes} is then the zero condition for
\begin{equation}
 yQ_r''+(-2S-y)Q_r'+rQ_r=0,
 \label{eq:Laguerre-ODE}
\end{equation}
whose polynomial solution is proportional to
$L_r^{(-2S-1)}(y)$~\cite{Szego1975}.
\end{proof}

\begin{corollary}[Szeg\H{o} distribution of the maximal descendant]
\label{cor:Szego}
Take the limits sequentially: first $\phi\to0$ at fixed
$r=2S$, and then $r\to\infty$.  For the maximal descendant define
\begin{equation}
 z_a=\frac{2x_a}{r},\qquad
 \mu_r=\frac{1}{r}\sum_{a=1}^{r}\delta_{z_a}.
 \label{eq:Szego-scaling}
\end{equation}
The polynomial in Theorem~\ref{thm:root-contraction} becomes
\begin{equation}
 (-1)^rL_r^{(-r-1)}(rz)
 =p_r(rz)
 :=\sum_{k=0}^{r}\frac{(rz)^k}{k!}.
 \label{eq:truncated-exponential}
\end{equation}
Consequently, $\mu_r$ converges weakly to the Szeg\H{o} measure
\begin{equation}
 d\mu_{\rm Sz}(z)
 =\frac{1}{2\pi\ii}\frac{1-z}{z}\,dz
 \quad\hbox{on}\quad
 \Gamma_{\rm Sz}
 =\left\{z\in\mathbb C:
 |z e^{1-z}|=1,\ |z|\leq1\right\},
 \label{eq:Szego-measure}
\end{equation}
with $\Gamma_{\rm Sz}$ oriented counterclockwise.
\end{corollary}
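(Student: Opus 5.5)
The proof has two parts: an exact algebraic identity for the polynomial, and an asymptotic statement about its zeros. For the identity, I start from the explicit sum
\[
L_r^{(\alpha)}(y)=\sum_{k=0}^r\binom{r+\alpha}{r-k}\frac{(-y)^k}{k!}
\]
and set $\alpha=-r-1$, which is the maximal descendant $r=2S$ of Theorem~\ref{thm:root-contraction}. The binomial becomes $\binom{-1}{r-k}=(-1)^{r-k}$. The signs then combine to $(-1)^r$ independently of $k$, which gives Eq.~\eqref{eq:truncated-exponential} after substituting $y=2x=rz$.

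A cross-check that avoids the binomial convention for negative upper index is direct verification. I will confirm that $p_r(y)=\sum_{k\le r}y^k/k!$ solves Eq.~\eqref{eq:Laguerre-ODE} with $2S=r$, namely $yp''-(r+y)p'+rp=0$. This is immediate from $p'=p-y^r/r!$ and $p''=p-y^{r-1}/(r-1)!-y^r/r!$. Since Theorem~\ref{thm:root-contraction} says the polynomial solution is unique up to normalization, the zeros $2x_a$ are exactly the zeros of $p_r$. The sequential order of limits matters only here: the limit $\phi\to0$ at fixed $r$ is already taken by Theorem~\ref{thm:root-contraction}. The remaining task is therefore purely about the zeros of the scaled partial sums $p_r(rz)$, and $\mu_r$ is their normalized counting measure.

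The second part is Szeg\H{o}'s classical theorem on the zeros of scaled Taylor sections of $e^{y}$ \cite{Szego1975}. I will sketch it in the following steps.

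\textbf{Step 1 (integral representation).} Write
\[
e^{-rz}p_r(rz)=1-\frac{r^{r+1}}{r!}\int_0^z(te^{-t})^r\,dt .
\]
Steepest descent at the saddle $t=1$ then yields the uniform asymptotics
\[
p_r(rz)e^{-rz}=1-\frac{(ze^{1-z})^r}{\sqrt{2\pi r}\,(1-z)}\bigl(1+o(1)\bigr)
\]
for $|z|<1$ away from $z=1$. The formula for $|z|>1$ is analogous, with $p_r(rz)\approx (rz)^r/r!\cdot(1-1/z)^{-1}$.

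\textbf{Step 2 (potential theory).} From Step 1, $\frac1r\log|p_r(rz)|$ converges locally uniformly off $\Gamma_{\rm Sz}$. The limit is $\operatorname{Re}z$ inside the curve and $1+\log|z|$ outside. These two harmonic functions agree exactly on $|ze^{1-z}|=1$. By the standard argument, the distributional Laplacian gives $\mu_r\to\frac1{2\pi}\Delta$ of the limit. This is the jump measure across $\Gamma_{\rm Sz}$, with density equal to the jump in normal derivative, which is the boundary value of $\frac{1}{2\pi \ii}\,d(\log z-z)=\frac{1}{2\pi\ii}\frac{1-z}{z}dz$ with counterclockwise orientation.

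\textbf{Step 3 (checks).} Positivity and unit total mass follow by evaluating the contour integral: the residue at $0$ gives $1$. It also remains to verify that no zeros accumulate strictly inside or outside $\Gamma_{\rm Sz}$, which follows from Step 1 since there $p_r(rz)$ is dominated by a single non-vanishing term.

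I expect the main obstacle to be the uniformity of the Step 1 asymptotics near the corner $z=1$, where the two exponentials balance and the factor $1/(1-z)$ blows up. The region $|z|\le1$ is also the part of the curve that matters. For this step I would not reprove the estimate but invoke Szeg\H{o}'s result. Weak convergence only needs $L^1_{\rm loc}$ convergence of the logarithmic potentials, and a shrinking neighbourhood of $z=1$ carries vanishing mass. That can be controlled by a crude bound on the number of zeros in small discs, for example via Jensen's formula.
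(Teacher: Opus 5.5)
Your proposal is correct and follows the paper's proof: the same evaluation $\binom{-1}{r-k}=(-1)^{r-k}$ turns $L_r^{(-r-1)}$ into $(-1)^r p_r$, and the zero distribution is then Szeg\H{o}'s classical theorem, which the paper simply cites, so your ODE cross-check and potential-theory sketch are optional extras. One small slip in that sketch: for $|z|<1$ the correction term in Step~1 should be $\frac{(ze^{1-z})^r}{\sqrt{2\pi r}}\,\frac{z}{1-z}$, coming from the endpoint $t=z$ (or directly from the tail $\sum_{k>r}(rz)^k/k!$) rather than the saddle; this is harmless for the $r$th-root limit and hence for the zero distribution.
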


\begin{proof}
The generalized Laguerre expansion gives
\begin{equation}
 L_r^{(-r-1)}(y)
 =\sum_{k=0}^{r}
 \binom{-1}{r-k}\frac{(-y)^k}{k!}
 =(-1)^r\sum_{k=0}^{r}\frac{y^k}{k!}.
\end{equation}
Setting $y=rz$ proves Eq.~\eqref{eq:truncated-exponential}.
The weak zero distribution of these rescaled exponential sections is
the classical Szeg\H{o} theorem~\cite{Szego1924,DiazMendoza2011}.
\end{proof}

The Stieltjes--Laguerre pattern is the universal rational-spin-chain
untwisting asymptotic~\cite{KazakovLeurentVolin2016}.  Its role here
is specific: the inhomogeneities are the exact interacting Anderson
variables $\xi_j=u(p_j)$, so the expansion proves directly that
finite $U$ does not enter the leading descendant contraction.

Corollary~\ref{cor:Szego} supplies a genuine continuum root measure,
structurally analogous to the root-density limits of conventional
Bethe ansatz.  Its interpretation is nevertheless different: it
describes symmetry descendants coalescing at the projective point
$\lambda=\infty$, not finite-energy Bethe strings or merging Anderson
charge roots.  No uniform statement about a joint
$\phi\to0$, $r\to\infty$ limit is required.

\paragraph{Comparison with a finite-energy root transition.}
Related arc-to-loop rearrangements of complex Bethe roots occur in
the exactly solvable Bose--Hubbard chain with unidirectional hopping,
where the root support diagnoses a superfluid--Mott transition
~\cite{Zheng2024}.  To expose the limited mathematical connection,
let $\bar\beta=N^{-1}\sum_j\beta_j$ and
$\Delta\beta=\max_{j,\ell}|\beta_j-\beta_\ell|$.  A leading
strong-coupling expansion of the unit-filling Bethe equations of
Ref.~\cite{Zheng2024} gives, with
$z=2\beta/U$ and
$C=(2t/U)e^{-\operatorname{Re}\bar z}$,
\[
 \log|ze^{-z}|=\log C+
 O\!\left[(\Delta\beta/U)^3\right].
\]
The cusp of the exponential map occurs at $z=1$; its closing value
$C=e^{-1}$ therefore gives the same level set
$|ze^{1-z}|=1$.  This is an asymptotic geometric correspondence,
not an identification of root species or phase structure.  The
roots of Ref.~\cite{Zheng2024} are finite-energy ground-state
variables, whereas the present $z_a$ are projective descendant
coordinates at $\lambda=\infty$ and obey the exact
truncated-exponential identity
\eqref{eq:truncated-exponential}.

In terms of the large-root Baxter factor,
\begin{equation}
 Q_{\infty}(\lambda)
 =\prod_{a=1}^r(\lambda-\lambda_a)
 \ \propto\
 \phi^{-r}L_r^{(-2S-1)}(2\phi\lambda)
 \quad (\phi\to0).
 \label{eq:Q-Laguerre}
\end{equation}
The $r$ roots have different scaled coordinates $x_a$, but on the
compactified rapidity sphere they all meet at the single point
$\lambda=\infty$.  The finite spin roots converge to the
highest-weight configuration, and Eq.~\eqref{eq:charge-BAE} shows
that the charge roots do the same because every large-root factor
tends to one.  Thus the correct coalescence is at infinity in the
nested spin sector; no coalescence of finite charge rapidities is
asserted.

The $U$ independence has a simple origin.  Interaction enters the
spin equation only through the finite dressed inhomogeneities
$\xi_j=u(p_j)$.  They first contribute beyond the universal leading
large-$\lambda$ term.  Finite $U$ therefore changes the center
energy, finite roots, and subleading approach to the EP, but not the
number $r$ of descendant roots at infinity or the block length
$2S+1$.

Figure~\ref{fig:root-contraction} provides a direct finite-volume
check using the \emph{coupled} charge and spin equations
\eqref{eq:charge-BAE}--\eqref{eq:spin-BAE}, rather than one-particle
bath-dressed poles.  The charge momenta remain finite and approach
their periodic values, while the descendant spin roots diverge and
their scaled coordinates converge to the Laguerre configuration.

\begin{figure}[p]
\centering
\includegraphics[width=0.96\textwidth]{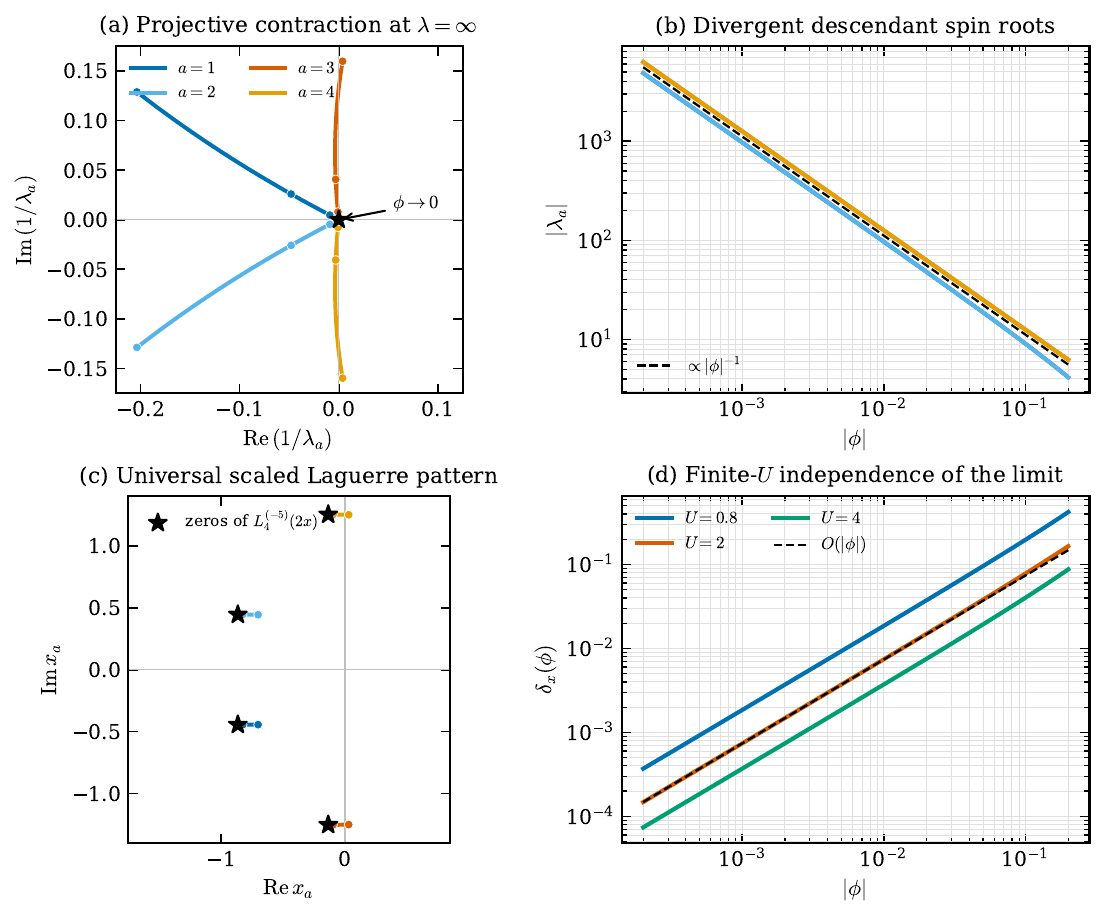}
\caption{\textbf{Exact finite-volume descendant-root contraction.}
The coupled twisted Bethe equations
\eqref{eq:charge-BAE}--\eqref{eq:spin-BAE} are solved for
$N_e=4$, $M_0=0$, $S=2$, descendant level $r=4$, $L=20$, and
$\Gamma_A=0.25$, with charge quantum numbers
$I_j=1,2,3,4$ and $\epsilon_d=-U/2$.
\textbf{(a)} For $U=2$, the inverse spin rapidities
$1/\lambda_a$ meet at the origin as $\phi\to0$, displaying the
single projective coalescence point $\lambda=\infty$.
\textbf{(b)} The four roots obey
$|\lambda_a|\propto|\phi|^{-1}$; conjugate partners overlap in
magnitude.
\textbf{(c)} The scaled coordinates $x_a=\phi\lambda_a$ converge
to the zeros of $L_4^{(-5)}(2x)$ (black stars).
\textbf{(d)} The permutation-matched error
$\delta_x(\phi)=\min_{\pi}\max_a
|x_a-x_{\pi(a)}^{\rm Lag}|$ is $O(|\phi|)$ for
$U=0.8,2,4$, demonstrating that finite $U$ changes only the
subleading approach.  The maximum unscaled Bethe-equation residual
over all displayed points is $1.18\times10^{-14}$.  No
coalescence of finite Anderson charge roots is asserted.}
\label{fig:root-contraction}
\end{figure}
\clearpage

\begin{figure}[t]
\centering
\includegraphics[width=0.96\textwidth]{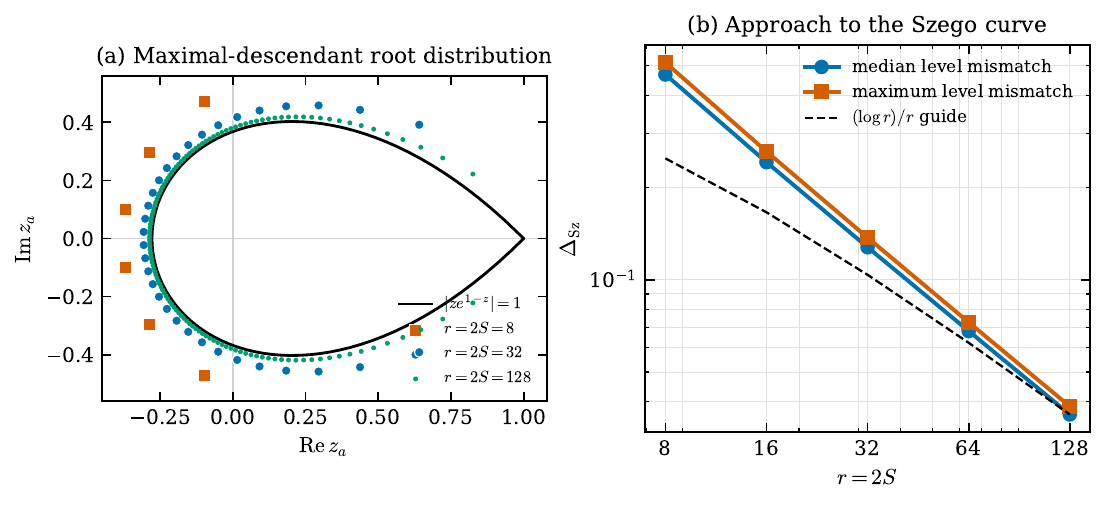}
\caption{\textbf{Large-multiplet descendant-root distribution.}
The maximal-descendant sequence $r=2S$ is evaluated after the
fixed-$r$ untwisting limit, using
$z_a=2x_a/r$.  Equation~\eqref{eq:truncated-exponential} makes these
points the exact zeros of the rescaled truncated exponential
$p_r(rz)$.
\textbf{(a)} Roots for $r=8,32,128$ approach the Szeg\H{o} contour
$\Gamma_{\rm Sz}$ (black line); only three root sets are displayed
to keep the complex plane legible.
\textbf{(b)} The median and maximum level mismatches
$\Delta_{\rm Sz}=||z_a e^{1-z_a}|-1|$ decrease monotonically for
$r=8,16,32,64,128$.  The dashed $(\log r)/r$ line is a visual guide
and is not used in the proof of Corollary~\ref{cor:Szego}.
All plotted roots were obtained by high-precision simultaneous
root finding; the largest relative polynomial residual is below
$6\times10^{-82}$.  This figure concerns the universal scaled
descendant coordinates, not a finite-$U$ phase diagram or a
coalescence of finite charge roots.}
\label{fig:Szego}
\end{figure}

\subsection{From Bethe descendants to generalized eigenvectors}

For a regular diagonal twist, the members of a pseudospin multiplet
are separated and possess ordinary Bethe vectors.  Removing the twist
sends the descendant roots to infinity; the asymptotic creation
operator becomes a global lowering generator.  The singular
similarity ${\cal C}_\phi$ in
Eq.~\eqref{eq:transfer-similarity} then maps these independent
descendants into vectors whose leading directions coincide.
Confluent, or divided-difference, combinations remove the common
leading direction and yield the generalized vectors satisfying
Eq.~\eqref{eq:chain-relation}.  This is the Bethe realization of the
Jordan-chain construction, rather than an additional assumption
about finite charge-root coalescence.

\section{Finite-matrix validation}
\label{sec:validation}

The identities above are analytic.  We additionally checked the
rational monodromy as explicit matrices to catch tensor-ordering and
twist-convention errors.  For generic complex spectral parameter and
inhomogeneities, the normalized residuals of YBE, RLL, RTT, and
$[t_G(z),t_G(w)]$ are at double-precision roundoff.  More
diagnostically, for
\begin{equation}
 G_0=\id+0.7E_{12},\qquad
 (\xi_1,\xi_2,\xi_3,\xi_4)=(-0.60,0.15,0.83,1.27),
 \label{eq:numerical-data}
\end{equation}
with $z=0.37+0.19\ii$, the Jordan partitions in
Table~\ref{tab:Jordan-checks} are obtained.

\begin{table}[ht]
\caption{Jordan blocks of a generic unipotently twisted rational
transfer matrix.  The partitions coincide with the dimensions of
the irreducible pseudospin factors in
$(\mathbb C^2)^{\otimes N_e}$.}
\label{tab:Jordan-checks}
\begin{ruledtabular}
\begin{tabular}{ccc}
$N_e$ & pseudospin decomposition & Jordan-block sizes\\
\hline
2 & $V_1\oplus V_0$ & $3,1$\\
3 & $V_{3/2}\oplus2V_{1/2}$ & $4,2,2$\\
4 & $V_2\oplus3V_1\oplus2V_0$ & $5,3,3,3,1,1$
\end{tabular}
\end{ruledtabular}
\end{table}

For each block of size $d$, the computed nullities of
$(t-\tau\id)^k$ are $1,2,\ldots,d$.  The characteristic polynomial
agrees with that of the periodic transfer matrix, as required by
Proposition~\ref{prop:isospectral}.  These calculations validate the
generic individual-transfer-matrix statement for $N_e\le4$; they are
not used as a substitute for Theorems~\ref{thm:RTT},
\ref{thm:manybody-EP}, and \ref{thm:root-contraction}.

\section{Implications and limits}
\label{sec:scope}

The construction has six direct implications.

\begin{enumerate}
\item \emph{Finite-$U$ integrability survives the EP.}
The interacting bulk matrix is the standard Anderson $R$-matrix in
the dressed rapidity.  The EP is a non-semisimple boundary twist, not
an additive deformation of $R$.

\item \emph{The many-body EP is enhanced, not generically lifted, by
the standard interaction.}
At fixed $S$, finite $U$ shifts $E_{\alpha S}$ and reorganizes the
orbital Bethe data, while exact global $GL(2)$ covariance protects a
single block of length $2S+1$.

\item \emph{Root coalescence has a precise location.}
The finite Anderson charge roots need not merge.  The universal
coalescence occurs at $\lambda=\infty$ in the nested spin problem,
and its scaled pattern is the Laguerre configuration
\eqref{eq:Laguerre}.

\item \emph{The maximal descendant has a continuum root measure.}
In the sequential untwisting and large-multiplet limits,
$z_a=2x_a/r$ becomes distributed on the Szeg\H{o} contour according
to Eq.~\eqref{eq:Szego-measure}.  This is a distribution of
projective symmetry descendants, not a new finite-energy phase.

\item \emph{The order can grow with particle number.}
A local rank-one nilpotent matrix acts through the many-body
representation $J^+_S$.  The possible order therefore reaches
$N_e+1$ in a maximally polarized multiplet, even though the local
one-particle matrix is only $2\times2$.

\item \emph{The twist has a four-state graded lift.}
The exterior-algebra matrix
$\widehat G=1\oplus G\oplus\det G$ is an exact symmetry of the
fermionic Hubbard $R$-matrix.  This supplementary compatibility
result preserves the graded Hubbard RTT algebra but is not used as
proof of the continuum Anderson construction: a model-identifying
four-state defect $L$-operator or boundary $K$-matrix is not claimed.
\end{enumerate}

These statements rely on the exact symmetry of
Assumption~\ref{ass:reduction}.  Curvature, unequal folded velocities,
nonscalar hybridization, component-dependent interactions, or an
energy-dependent matrix $M$ can break the global $GL(2)$ covariance
and may split or move the many-body EP.  Likewise, a microscopic
self-consistent treatment in which $U$ renormalizes the effective
parameters $\beta$ and $\gamma$ can shift the physical locus
$\beta^2=\gamma^2$; that is a different model.  Within
Eq.~\eqref{eq:H-M}, the locus and the block order are exact and
$U$ independent.

No thermodynamic Bethe ansatz or observable critical exponent is
deduced from the Jordan structure here.  The algebraic result
provides the appropriate starting point for such questions, but
non-diagonalizable transfer matrices require generalized-state
completeness and careful treatment of norms before thermodynamic or
dynamical conclusions are drawn.

\section{Conclusion}
\label{sec:conclusion}

We have established a finite-$U$ integrable realization of
many-body exceptional-point amplification in a linearized Anderson
impurity.  The spin--orbit branches supply the equal-velocity chiral
components after branchwise linearization and folding.  A constant
pseudo-Hermitian impurity matrix is then exactly a conserved
complexified pseudospin deformation of the conventional Anderson
model, or equivalently a $GL(2,\mathbb C)$ boundary twist.

The exact Anderson two-electron matrix retains its rational
difference form in the nonlinear dressed rapidity and supplies YBE,
RLL, and RTT for arbitrary particle number.  CPT and non-unitary
rotations preserve this algebra because the rational $R$-matrix is
$GL(2)$ invariant.  At the local EP the twist becomes unipotent while
remaining invertible.

The same twist also admits the canonical four-state lift
$1\oplus G\oplus\det G$, which the Supplementary Material proves to
be compatible with the graded $16\times16$ fermionic Hubbard
$R$-matrix.  We keep this algebraic extension separate from the
continuum proof: no four-state Hubbard defect operator reproducing
the Hamiltonian~\eqref{eq:H-M} is asserted.

The resulting spectral statement is exact: every pseudospin-$S$
multiplet is converted into one Jordan block of order $2S+1$.
The nested Bethe equations give the same contraction geometrically:
the descendant roots approach the point at infinity with scaled
positions fixed by $L_r^{(-2S-1)}$.  Finite $U$ remains fully present
in the dressed charge data and the multiplet energy but drops out of
the symmetry-controlled Jordan order.  For the maximal descendant,
the additional $r\to\infty$ scaling turns this polynomial into a
truncated exponential and produces the explicit Szeg\H{o}-curve
root measure.  This separates the robust mathematical construction
from microscopic corrections that break the equal-velocity,
scalar-coupling assumptions.

\section*{Data Availability}
The manuscript source, verification scripts, and numerical root data that
support this study are openly available in the version-controlled repository
\href{https://github.com/VMKPHYSMATH/jordan-twist-finite-u-anderson}{\texttt{VMKPHYSMATH/jordan-twist-finite-u-anderson}}.
The verified v0.1.0 snapshot is archived at
\href{https://doi.org/10.5281/zenodo.21794458}{doi:10.5281/zenodo.21794458};
the all-versions record is
\href{https://doi.org/10.5281/zenodo.21794457}{doi:10.5281/zenodo.21794457}.

\begin{acknowledgments}
The author thanks colleagues and readers whose technical questions
helped sharpen the distinction between a bulk $R$-matrix deformation,
a non-unitary change of basis, and a singular boundary twist.
\end{acknowledgments}

\end{document}